\documentclass[a4paper,USenglish]{lipics-v2021}

\pdfoutput=1 
\hideLIPIcs  

\title{Exact Competitive Ratios of Index Rules for Scheduling with Obligatory Testing} 

\author{Felix {Buld}}{Operations Research, Technical University of Munich, Arcisstr. 21, 80333 Munich, Germany}{felix.buld@tum.de}{https://orcid.org/0009-0007-0230-8024}{Partly funded by the German Research Foundation (DFG) - GRK 2201/2 - Project number 277991500.}
\authorrunning{F. Buld}

\Copyright{Felix Buld}

\ccsdesc[100]{Theory of computation $\rightarrow$ Scheduling algorithms;  Theory of computation $\rightarrow$ Adversary models} 

\keywords{Scheduling with testing, Explorable uncertainty, Online algorithms, Competitive ratio, Total completion time, Analysis of algorithms} 

\nolinenumbers 
\usepackage{nicefrac}
\newcommand{\E}[0]{\mathbb{E}}

\begin{document}

\maketitle

\begin{abstract}
We revisit scheduling with obligatory testing, where each job consists of two operations: a test with an initially known time and an operation whose processing time is revealed upon completion of the job's test. We develop an iterative competitive-ratio-preserving instance-reduction method that enables us to compute exact competitive ratios directly on highly structured instances for various index rules. Our main result is a clean analysis showing that the competitive ratio of the algorithm $1$-SORT is exactly the Golden Ratio $\varphi\approx 1.618$, improving the recent best-known upper bound of $1.861$ for arbitrary testing times. We further leverage this method to obtain near-tight bounds for a simple randomized algorithm on instances with uniform testing times.
\end{abstract}

\section{Introduction}
This paper studies scheduling under \emph{explorable uncertainty}, where some job parameters are initially unknown but can be revealed through time-consuming tests. This uncertainty reduction comes with a trade-off towards the completion of jobs for which full information is already available. Thus, the model bridges classical offline and online scheduling settings, augmented with testing as a layer of active information acquisition at the cost of time investment. The paper follows a line of research on scheduling with testing that originated in the foundational works of \cite{durr2020adversarial,levi2019scheduling} and provides a competitive analysis that characterizes the exact performance guarantee of the studied scheduling rules.

\subsection{Problem Definition}\label{sec: Problem Definition}
In this paper, we study scheduling with obligatory testing as introduced by Dogeas, Erlebach, and Liang~\cite{dogeas2024obligatory}. In this model, $n$ jobs are to be scheduled on a single machine. Each job $j=1,\ldots,n$ consists of two operations: an obligatory test of testing time $t_j\geq 0$, and a processing operation of processing time $p_j\geq 0$ that can start only after the test, at any later time. The processing time of job~$j$ is revealed only after its test is completed; the testing time is known in advance. These $2n$ operations are to be scheduled non-preemptively, one operation at a time, on a single machine. Let $C_j$ denote the completion time of job~$j$ in a schedule. The objective is to minimize the total completion time $\sum C_j$. The goal of this paper is to devise and analyze deterministic and randomized algorithms of provably good performance.

\subsection{Preliminaries and Related Work}
Let $\Psi$ be the set of instances of this scheduling problem. For an instance $I\in\Psi$ of this scheduling problem, let $ALG(I)$ denote the algorithmic cost and $OPT(I)$ the optimal cost, or $ALG$ and $OPT$ if the instance is clear from the context. The performance is measured by the competitive ratio, i.e., $\rho(ALG):=\sup_{I\in\Psi} \nicefrac{ALG(I)}{OPT(I)}$ for a deterministic algorithm and $\rho(ALG):=\sup_{I\in\Psi} \nicefrac{E\left[ALG(I)\right]}{OPT(I)}$ for a randomized algorithm, against an oblivious adversary. 
In the optimal (offline) solution with full information, jobs are sorted by \emph{shortest processing time first} (\textsc{SPT}), i.e., non-decreasingly in their total processing time $t_j + p_j$, and scheduled in this order, without preemption between the two operations of each job. Let $\sigma$ denote the permutation that sorts jobs in this \textsc{SPT}-order. Then, the optimal cost is
\begin{equation}\label{eq: Optimal cost}
    OPT=\sum_j (n+1-j)(t_{\sigma(j)}+p_{\sigma(j)})=\sum_j (t_j + p_j) + \sum_{i<j} \min\lbrace t_i+p_i, t_j+p_j\rbrace \ .
\end{equation}
The second expression is a reformulation in terms of \emph{pairwise delays}. In general, with~$d(i,j)$ denoting the \emph{delay} of job $j$ caused by job $i$, i.e., the time spent on testing or processing job $i$ before $j$ has been completed, here $d(i,j)\in\lbrace 0, t_i, t_i + p_i\rbrace$, the algorithmic cost can be written as
\begin{equation}\label{eq: Algorithmic cost}
    ALG=\sum_j (t_j + p_j) + \sum_{i<j} d(i,j) + d(j,i) \ . 
\end{equation}
This pairwise-delay decomposition is often used to succinctly represent the algorithmic cost of algorithms and analyze their performance, and also extends to randomized algorithms using (expected) algorithmic cost, see \cite{buld2025scheduling,cosson2026online,dogeas2024obligatory,liu2023power,motwani1994nonclairvoyant}.

In the classical non-clairvoyant scheduling problem, with the objective of minimizing~$\sum C_j$, the processing time $p_j$ is revealed only upon completion of a job; obligatory tests are not available. The non-preemptive variant of this scheduling problem is unbounded, whereas in the preemptive version, the \emph{Round Robin}-Rule, which rotates among infinitesimal pieces of incomplete jobs, achieves the best-possible competitive ratio of $2$ \cite{motwani1994nonclairvoyant}. The algorithm and its analysis carry over to obligatory testing if one allows arbitrary preemption, but the testing model provides strictly more information and can, without loss of generality, be considered as the \emph{test-preemptive} model in Section~\ref{sec: Problem Definition} \cite{buld2025scheduling,dogeas2024obligatory,durr2020adversarial,liu2023power}, thereby allowing strictly smaller competitive ratios for algorithms.

The obligatory testing model in Section~\ref{sec: Problem Definition} was first studied by Dogeas et al.~\cite{dogeas2024obligatory}. They consider the rule $\beta$-SORT, where testing operations have priority value $\beta t_j$ and processing operations have priority value $ p_j$, and prove that $1$-SORT has a competitive ratio of at most $1.861$ using an enhanced amortization scheme based on the pairwise-delay framework. They also provide a lower bound based on the parameter $\beta>0$. This lower bound is minimal for $1$-SORT and equals the \emph{Golden Ratio} $\varphi:=\nicefrac{(1+\sqrt{5})}{2}\approx 1.618$. This is the positive root of the quadratic equation $x^2-x-1=0$ and a recurring parameter in the competitive analysis of various algorithms in the scheduling-with-testing literature (see, e.g., \cite{durr2020adversarial}) and beyond.

For uniform testing times, they use an instance-reduction framework from Dürr, Erlebach, Megow, and Meißner~\cite{durr2020adversarial} to compute the exact competitive ratio for SIDLE, a variant of $\beta$-SORT on these instances, which is minimized for $\beta\approx 1.355$, and equal to about $1.585$. They also give a lower bound of $\sqrt{2}$ for uniform testing times. This bound was improved by recent papers \cite{liang2026hardness,rozhon2026test}, which give lower bounds of $\nicefrac{3}{2}$ \cite{liang2026hardness} and about $1.57$ \cite{rozhon2026test} for any deterministic algorithm, and of $\nicefrac{4}{3}$ \cite{rozhon2026test} for any randomized algorithm. For the latter two results, algorithms with matching bounds of rather intricate structure and analysis were developed with substantial assistance from large language models~\cite{rozhon2026test}.
Results for index rules for scheduling on a single machine can be transferred to parallel identical machines, potentially with some performance loss, using the frameworks developed in \cite{buld2025scheduling,eastman1964bounds,gong2024approximation,liang2026hardness}.
The obligatory testing model and the algorithm $1$-SORT were also generalized to jobs arriving online and consisting of chains of successive operations that are gradually revealed, and a competitive ratio of $2$ was achieved in the obligatory testing model for instances with uniform testing times for total flow time minimization \cite{lindermayr2026gradually}.

A very related model from which the obligatory testing model originated is the \emph{optional} testing model. In this model, one is given intervals for processing times, i.e., $p_j\in\left[0,u_j\right]$ for each job~$j$, and can either choose to process a job untested for $u_j$ units of time or test it, taking time $t_j$, and process it later for $p_j$ units of time. Similar index rules to $\beta$-SORT and generalizations thereof for optional testing have been studied in various papers (e.g., \cite{albers2021explorable,buld2025scheduling,durr2020adversarial,gong2024approximation,liu2023power}). The rule $\beta$-SORT for obligatory testing originated from an extended rule $(\alpha,\beta)$-SORT for optional testing \cite{albers2021explorable,dogeas2024obligatory}. Instance-reduction frameworks have been used for algorithms in the model with uniform testing times \cite{durr2020adversarial}. For the general model, the analysis techniques have mostly focused on locally bounding pairwise delays \cite{buld2025scheduling,gong2024approximation,liu2023power}.

In the stochastic counterparts for scheduling with testing \cite{levi2019scheduling,levi2024scheduling,sun2018patient}, processing times and job priorities are drawn from independent probability distributions. For obligatory testing in this context, Levi, Magnanti, and Shaposhnik~\cite{levi2024scheduling} prove the optimality of an index rule.

Approximation-ratio-preserving instance-reduction approaches have been used for scheduling rules in other contexts, e.g., \cite{schwiegelshohn2011alternative}, or for integrality-gap-preserving reductions within linear programming-based approaches to other combinatorial optimization problems \cite{encz2026integrality}.

\subsection{Contributions}
Our main contributions in this paper are threefold. 

First, we develop an instance-reduction method that is competitive-ratio-preserving, i.e., a sequence of structural adaptations, such as shifts in processing times and adversarial exchange arguments, applied to an instance without decreasing an algorithm's competitive ratio. This process results in a highly structured family of instances for which the exact competitive ratio can be computed directly. 

Second, we apply this method to the algorithmic family $\beta$-SORT and characterize its exact competitive ratios as functions of $\beta$. This yields an improvement for the competitive-ratio for $1$-SORT from the currently best-known guarantee $1.861$ \cite{dogeas2024obligatory} to the Golden Ratio $\varphi\approx 1.618$ and solves the open question on the exact competitive ratio by Dogeas et al.~\cite{dogeas2024obligatory}. It also shows that a simple index rule almost matches the known lower bounds. We further prove that deeper amortization, beyond locally bounding pairwise costs, is necessary to achieve exact competitive ratios for this algorithmic family. Our method does this via a clean analysis. 

Third, we connect the two (previously separate) lines of research on stochastic and adversarial scheduling with testing via Yao's Principle. We leverage this to provide lower bounds on randomized algorithms for adversarial scheduling with obligatory testing of $\nicefrac{4}{3}$ in the uniform testing case. We provide a family of randomized algorithms that are simple variants of the deterministic family $\beta$-SORT paired with a random permutation of tests. Using our instance-reduction method, we can show that, with an optimized parameter $\beta \approx 1.947$, this achieves an almost-optimal competitive ratio of at most $1.359$ in the uniform testing case.

\subsection{Outline of the Paper}
This paper is structured as follows. We start by presenting our instance-reduction method in Section~\ref{sec: Instance-Reduction}. In Section~\ref{sec: Deterministic Algorithms}, we present the family of deterministic algorithms $\beta$-SORT and apply our instance-reduction method to characterize the exact competitive ratios. In Section~\ref{sec: Randomized Algorithms}, we study randomized algorithms and also employ our instance-reduction method together with a presentation of a lower bound via the connection between stochastic and adversarial scheduling with testing. Finally, a conclusion is drawn in Section~\ref{sec: Conclusion}.

\section{A Competitive-Ratio-Preserving Instance-Reduction Method}\label{sec: Instance-Reduction}
In this section, we first describe the high-level idea behind our instance-reduction method, followed by the technical aspects used throughout this paper to exactly compute the competitive ratios of various algorithms. The method applied to an arbitrary instance of the scheduling problem consists of three main steps:
\begin{quote}
\begin{description}
    \item[1. Competitive-Ratio-Preserving Shifts] Iteratively, we jointly increase or decrease a subset of testing or processing times to strictly decrease the number of distinct parameters in an instance.
    \item[2. Adversarial Exchange Arguments] Tie-breaks in the algorithmic structure are resolved in favor of the adversary without changing the structure of the optimal solution. 
    \item[3. Exact Computation on Highly Structured Instances] For worst-case instances with a small number of distinct parameters, algorithmic and optimal costs are highly structured, and the exact competitive ratios can directly be computed.
\end{description}
\end{quote}

The first step of the instance-reduction relies heavily on the following structural properties in Lemma~\ref{lemma: maximum at the bounds} and~\ref{lemma: shift to the bounds}. These two results are generalizations of two results in \cite[Proposition $2$ and Lemma $3$]{durr2020adversarial}, which provided a similar reasoning for optional testing and uniform testing times that were applied in \cite{dogeas2024obligatory,durr2020adversarial}. The main insight behind the enhanced methods in the first step is that we need a more general property of the optimal cost and iterative shifts to make these results applicable to non-uniform testing times. Lemma~\ref{lemma: maximum at the bounds} is the technical core of this step.

\begin{lemma}\label{lemma: maximum at the bounds}
    Consider a parameter $x\in\left[l,u\right]$ for bounds $0\leq l<u$. Suppose that on this interval, $f(x)=a+bx$ with $a,b\geq0$, and $g$ is a piecewise affine, continuous, non-decreasing, concave function with $g(l)>0$. Then, $\max_{x\in\left[l,u\right]} \nicefrac{f(x)}{g(x)}$ is attained for at least one value $x^*\in\lbrace l, u\rbrace$.
\end{lemma}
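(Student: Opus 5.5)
Since $g$ is non-decreasing with $g(l)>0$, we have $g(x)\geq g(l)>0$ on all of $[l,u]$. Hence $h(x):=\nicefrac{f(x)}{g(x)}$ is well defined and continuous, and it attains a maximum on the compact interval. The plan is to show that $h$ is quasi-convex on $[l,u]$, that is, $h(x)\leq\max\lbrace h(l),h(u)\rbrace$ for every $x$; this gives the claim. We argue through superlevel sets: for $c\geq 0$, consider $S_c:=\lbrace x\in[l,u]: f(x)\geq c\, g(x)\rbrace$. Quasi-convexity is equivalent to every strict sublevel set $\lbrace h<c\rbrace=\lbrace f-cg<0\rbrace$ being convex. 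For each fixed $c\geq0$, the function $f-cg$ is affine minus a nonnegative multiple of a concave function, so it is convex. A strict sublevel set of a convex function is convex, i.e., an interval, which proves quasi-convexity.

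Concretely, suppose for contradiction that some $x\in(l,u)$ has $h(x)>c':=\max\lbrace h(l),h(u)\rbrace$. Write $x=\lambda l+(1-\lambda)u$ with $\lambda\in(0,1)$. By convexity of $\phi:=f-c'g$, we get
\[
\phi(x)\leq \lambda\phi(l)+(1-\lambda)\phi(u)\leq 0 ,
\]
because $\phi(l)=g(l)\bigl(h(l)-c'\bigr)\leq 0$ and likewise $\phi(u)\leq0$, using $g>0$. On the other hand, $h(x)>c'$ means $\phi(x)=g(x)\bigl(h(x)-c'\bigr)>0$, a contradiction. Hence the maximum is attained at $x^*=l$ or $x^*=u$, whichever has the larger ratio.

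The only points needing care are the sign conditions. First, $c'\geq0$ must hold so that $-c'g$ is convex; this follows from $f\geq0$ (since $a,b\geq0$ and $x\geq l\geq0$) together with $g>0$. Second, we need positivity of $g$ on the whole interval, which monotonicity supplies. The piecewise-affine assumption and the constraint $b\geq0$ are not needed beyond ensuring $f\geq0$. I expect no real obstacle; the main step is recognizing that the affine-over-concave structure makes $f-cg$ convex for every nonnegative level $c$.
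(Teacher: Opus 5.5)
Your proof is correct, but it takes a different route from the paper.

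The paper works piece by piece. It splits $[l,u]$ into the affine pieces of $g$ and writes $g(x)=c_k+d_kx$, with slopes $d_k$ strictly decreasing and intercepts $c_k$ strictly increasing; the latter uses $x_{k+1}>l\geq 0$. On each piece the derivative of $f/g$ has the constant sign of $bc_k-ad_k$. Because $a,b\geq 0$, this quantity is non-decreasing in $k$. So the ratio is first non-increasing and then non-decreasing, with continuity implicitly gluing the pieces together.

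You replace this with the Dinkelbach-type observation that $f-c'g$ is convex for the nonnegative level $c'=\max\lbrace h(l),h(u)\rbrace$. A single application of convexity between the endpoints then gives $h(x)\leq c'$ everywhere.

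Your argument is shorter and more general. It needs no piecewise-affine structure and no differentiation. It does not even need to invoke a priori existence of the maximum, since it directly bounds every interior value by an endpoint value. The sign conditions enter only to guarantee $c'\geq 0$, and monotonicity of $g$ only to guarantee $g>0$. It would also go through verbatim with any nonnegative convex $f$ in place of the affine one.

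What the paper's calculus argument buys is an explicit description of the valley shape of the ratio across the partition. To compare consecutive pieces, however, it relies specifically on $a,b\geq 0$ and $l\geq 0$.
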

\begin{proof}
    We write $g(x)=c_k + d_kx$ for all $x\in \left[x_k,x_{k+1}\right]$ for a finite partition $l=x_0 < x_1 < \ldots < x_K=u$ and all $k=0,\ldots,K-1$ with $d_k > d_{k+1}$ and $c_k < c_{k+1}$ using the stated properties of $g$ and the interval.
    On the interior of each sub-interval $\left(x_k,x_{k+1}\right)$, the derivative of $\nicefrac{f}{g}$ equals
    $\nicefrac{(bc_k-ad_k)}{g(x)^2}$. The sign of the derivative is independent of $x$. Since $a,b\geq0$, $d_k- d_{k+1}>0$ and $c_{k+1}-c_k > 0$, we have $bc_k-ad_k \leq bc_{k+1}-ad_{k+1}$ for each~$k=0,\ldots,K-2$, implying that the sign is non-decreasing in $k$. Hence, we can conclude that the ratio is first non-increasing, then non-decreasing, and attains its maximum at least at one of the bounds.
\end{proof}

Consider an arbitrary but fixed instance $I$ of the scheduling problem with obligatory testing and an arbitrary algorithm $ALG$. 
Lemma~\ref{lemma: maximum at the bounds} will now be applied to the cost functions in this setting with $f=ALG$ and $g=OPT$ to obtain another instance parametrized by some value $x$, without decreasing the ratio. For this, we may write testing and processing times as $t_j(x)$ and $p_j(x)$, the algorithmic cost as $ALG(x)$, the optimal cost as~$OPT(x)$, and the pairwise delay as $d_x(i,j) + d_x(j,i)$ as functions in $x$.

\begin{lemma}\label{lemma: shift to the bounds}
         Consider a parameter $x\in\left[l,u\right]$ for bounds $0\leq l<u$. Suppose that we shift operation times on this interval such that all testing and processing times $t_j(x)$ and $p_j(x)$ are affine and non-decreasing functions in~$x$ with $t_j(0),p_j(0)\geq 0$, and the pairwise delays $d_x(i,j) + d_x(j,i)$ of $ALG$ for all pairs of jobs~$i\neq j$ satisfy the same properties, and further $OPT(l)>0$. Then, the worst-case ratio $\max_{x\in\left[l,u\right]} \nicefrac{ALG(x)}{OPT(x)}$ on this interval is attained for at least one $x^*\in\lbrace l, u\rbrace$.
\end{lemma}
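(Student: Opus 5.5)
The plan is to reduce the statement to Lemma~\ref{lemma: maximum at the bounds} with $f=ALG$ and $g=OPT$, viewed as functions of $x$ on $[l,u]$. Two things must be checked: $ALG(x)$ has the form $a+bx$ with $a,b\geq 0$, and $OPT(x)$ is piecewise affine, continuous, non-decreasing and concave with $OPT(l)>0$. The last condition is a hypothesis of the statement, so only the structural properties need an argument.

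\textbf{The algorithmic cost.} I would use the pairwise-delay decomposition~\eqref{eq: Algorithmic cost}:
\[
ALG(x)=\sum_j \bigl(t_j(x)+p_j(x)\bigr)+\sum_{i<j}\bigl(d_x(i,j)+d_x(j,i)\bigr).
\]
By hypothesis each summand is affine and non-decreasing in $x$, so its slope is nonnegative. At $x=0$ each summand is nonnegative: for $t_j,p_j$ this is assumed, and I read the assumption that the delays ``satisfy the same properties'' as including nonnegativity at $0$ (delays are nonnegative anyway). A finite sum of such functions is again of the form $a+bx$ with $a=ALG(0)\geq 0$ and $b\geq 0$. If the affine description is only required on $[l,u]$, I would extend the functions affinely to $0$ and use the same nonnegativity of the intercepts.

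\textbf{The optimal cost.} Here I would use the second expression in~\eqref{eq: Optimal cost}:
\[
OPT(x)=\sum_j\bigl(t_j(x)+p_j(x)\bigr)+\sum_{i<j}\min\bigl\lbrace t_i(x)+p_i(x),\,t_j(x)+p_j(x)\bigr\rbrace.
\]
Each $t_j(x)+p_j(x)$ is affine and non-decreasing. The minimum of two affine non-decreasing functions is continuous, non-decreasing, concave and piecewise affine with at most one breakpoint. Sums of finitely many such functions keep all four properties, and the total number of breakpoints stays finite. Hence $OPT$ meets all requirements on $g$.

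Lemma~\ref{lemma: maximum at the bounds} then yields that $\max_{x\in[l,u]}ALG(x)/OPT(x)$ is attained at $l$ or $u$. The main subtlety is the first step: one must make sure the algorithm's delays really are affine throughout the interval, meaning its decisions (the order of operations) do not change inside $[l,u]$. This is exactly what the hypothesis on $d_x(i,j)+d_x(j,i)$ encodes, so it needs no proof here, but it is the condition that will have to be verified whenever the lemma is applied.
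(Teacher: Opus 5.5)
Your proposal is correct and matches the paper's proof: you write $OPT(x)$ in the pairwise-minimum form of Equation~\eqref{eq: Optimal cost}, which gives a continuous, non-decreasing, concave, piecewise-affine function, and $ALG(x)$ via Equation~\eqref{eq: Algorithmic cost} as a non-decreasing affine function with $ALG(0)\geq 0$, then invoke Lemma~\ref{lemma: maximum at the bounds}. Your remark that the substantive work lies in checking the affine-delay hypothesis whenever the lemma is applied is also how the paper uses it, in Remark~\ref{remark: technical aspects lemma 2}.
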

\begin{proof}
     For each job $j$, let $q_j(x):= t_j(x) + p_j(x)$ for $x\in\left[l,u\right]$, which is non-decreasing and affine on this interval as $t_j, p_j$ are, and $q_j(0)\geq 0$. By Equation~\eqref{eq: Optimal cost}, the optimal cost can be written as $OPT(x)=\sum_j q_j(x) + \sum_{i<j} \min\lbrace q_i(x), q_j(x)\rbrace$. This yields a non-decreasing and piecewise-affine function. Affine functions are continuous and concave, and pointwise minima and sums preserve these properties. Hence, $OPT(x)$ is also a continuous and concave function.

    By Equation~\eqref{eq: Algorithmic cost}, the algorithmic cost equals $ALG(x)=\sum_j q_j(x) + \sum_{i<j}d_x(i,j) + d_x(j,i)$. Again, this is a non-decreasing, affine function, with $ALG(0)\geq 0$ being a sum of functions with these properties. An application of Lemma~\ref{lemma: maximum at the bounds} yields the claimed result.
\end{proof}
\begin{remark}\label{remark: technical aspects lemma 2}
    Lemma~\ref{lemma: shift to the bounds} will be applied to a class of index rules in Section~\ref{sec: Deterministic Algorithms} as follows. Consider index rules that assign testing operations priority values linear in $t_j$ and processing operations priority values linear in $p_j$, and always select the next operation to be scheduled from the set of available operations with the smallest priority value. Let $S$ be a fixed set of operations, all with the same priority value $y>0$. Suppose that we shift all testing or processing times in $S$ linearly such that their priority values are all equal to a value $x\in\left[l,u\right]$, with $0\leq l < y < u$, and keep all operations not in $S$ fixed and the same overall relative priority order of operations. Then, note that the pairwise delays $d_x(i,j) + d_x(j,i)$ are affine and non-decreasing in the shift $x$, and non-negative at $x=0$. Thus, we can apply Lemma~\ref{lemma: shift to the bounds} in this situation and iteratively reduce the number of different priority values and testing and processing times. 

    This result strictly generalizes the corresponding step in the instance-reduction framework of Dürr et al.~\cite{durr2020adversarial} for optional testing and uniform testing times, which restricts shifts to those of processing operations that preserve the order of operations in both the algorithm and the optimum. This suffices for uniform testing times, but our strict generalization is key for non-uniform testing times, see Theorem~\ref{thm: Instance Reduction}.

    The technical assumption $OPT(l)>0$ ensures that every occurring term is well-defined. This is equivalent to the existence of at least one strictly positive operation time. We assume this without loss of generality in the following. Lemma~\ref{lemma: shift to the bounds} also extends to randomized algorithms using expected pairwise delays. We will use this in Section~\ref{sec: Randomized Algorithms}.
\end{remark}

With Lemma~\ref{lemma: shift to the bounds}, we may iteratively find sequences of instances $I=I_0, I_1,I_2,\ldots,I_k$ such that $\nicefrac{ALG(I)}{OPT(I)}\leq \nicefrac{ALG(I_1)}{OPT(I_1)}\leq\ldots\leq \nicefrac{ALG(I_k)}{OPT(I_k)}$ for some $k$. 

The limitation of Lemma~\ref{lemma: shift to the bounds} is that it holds as long as the relative order of operations remains the same in the algorithm and the pairwise delays are affine. Since the algorithmic cost is not necessarily continuous under such shifts, when the order of operations changes, we will apply another \emph{adversarial} exchange argument as a second step in our instance-reduction framework. With this, we construct for an instance derived by the first step, $I_1\in\Psi_1\subsetneq \Psi$, another even more structured instance $I_2(\varepsilon)\in\Psi_2\subsetneq \Psi$ for every $\varepsilon>0$ such that $\nicefrac{ALG(I_2(\varepsilon))}{OPT(I_2(\varepsilon))}\geq\nicefrac{ALG(I_1)}{OPT(I_1)}-\varepsilon$.

Lemma~\ref{lemma: shift to the bounds} and the adversarial exchange argument will be used to identify a highly structured subclass of instances $\Psi_2\subsetneq \Psi$ for an algorithm $ALG$ such that 
\begin{equation}\label{eq: smaller instance class same ratio}
    \rho(ALG)=\sup_{I\in\Psi} \frac{ALG(I)}{OPT(I)}=\sup_{I\in\Psi_2} \frac{ALG(I)}{OPT(I)} \ ,
\end{equation} 
and for which an exact computation can easily be established, e.g., by solving a quadratic equation, see Section~\ref{sec: Exact Analysis Competitive Ratio}. 

\section{Deterministic Algorithms}\label{sec: Deterministic Algorithms}
In this section, we study the family of deterministic algorithms, $\beta$-SORT. Section~\ref{subsec: algorithmic family} describes this family, and Section~\ref{subsec: Limitations of Local Pairwise Bounds} shows limitations of previous approaches. In Section~\ref{sec: Exact Analysis Competitive Ratio}, we leverage our instance-reduction method from Section~\ref{sec: Instance-Reduction} to characterize the competitive ratio of $\beta$-SORT exactly and, in particular, show that $1$-SORT is $\varphi$-competitive for scheduling with obligatory testing, solving the open question on the exact competitive ratio by Dogeas et al.~\cite{dogeas2024obligatory}. 

\subsection{Algorithmic Family}\label{subsec: algorithmic family}
The family of deterministic algorithms $\beta$-\textsc{SORT}, as considered in~\cite{dogeas2024obligatory} for obligatory testing and similarly for optional testing, e.g. \cite{albers2021explorable,buld2025scheduling,durr2020adversarial,liu2023power}. It is a simple index rule as already described in Remark~\ref{remark: technical aspects lemma 2}: processing operations carry the priority value $p_j$, and testing operations carry the priority value $\beta t_j$, where $\beta>0$ is a single joint scaling factor that balances the priority between testing and processing.

\begin{definition}\label{def: Beta-SORT}
    The algorithm $\beta$-\textsc{SORT} is defined as follows: 
    Keep a priority queue and initially insert all testing operations with priority value $\beta t_j$. As long as the queue is not empty, schedule an operation with the smallest priority value and remove it from the queue. If it was a testing operation, add the processing operation with priority value $p_j$ to the queue.
\end{definition}

For $\beta=1$, we especially obtain a generalization of the \textsc{SPT}-rule. For a typical instance, $\beta$-\textsc{SORT} schedules more tests, i.e., it \emph{explores} more, in the earlier phase than the optimal algorithm. On structured subclasses of problems, e.g., for uniform testing times, a value of $\beta\neq 1$ may dominate $\beta=1$ in the worst-case, as we will see in Section~\ref{sec: Randomized Algorithms}.

Let $\rho(\beta)$ denote the competitive ratio of $\beta$-\textsc{SORT} for every $\beta>0$. Note that $\beta$-\textsc{SORT} is formulated as an index rule in Definition~\ref{def: Beta-SORT}, allowing for arbitrary tie-breaks between operations of the same priority value. However, the following result shows that the competitive ratio is independent of the tie-breaking rule for arbitrary testing times. Small perturbations to operation times within an instance can be used to enforce a specific order of operations without requiring tie-breaking.

\begin{lemma}\label{lemma: tie breaking}
    For an instance $I$ and any order of operations consistent with Definition~\ref{def: Beta-SORT}, we can find another instance $I_2$ with pairwise distinct priority values such that
    $\nicefrac{ALG(I_2)}{OPT(I_2)}\geq\nicefrac{ALG(I)}{OPT(I)}-\varepsilon$.
\end{lemma}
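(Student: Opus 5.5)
Fix the instance $I$ and the order $\pi$ of its $2n$ operations produced by $\beta$-\textsc{SORT} under the given tie-breaking. The plan is to perturb the operation times slightly so that the priority values become pairwise distinct and $\pi$ becomes the \emph{unique} order consistent with Definition~\ref{def: Beta-SORT}, and then to use continuity to control the ratio. Concretely, let $\pi(1),\ldots,\pi(2n)$ be the operations in scheduled order and fix $\delta>0$. Replace each testing time $t_j$ by $t_j+\nicefrac{\delta_k}{\beta}$ and each processing time $p_j$ by $p_j+\delta_k$, where $k$ is the position of that operation in $\pi$ and $\delta_k:=k\delta/(2n)$. The priority value of the operation at position $k$ then increases by exactly $\delta_k$. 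This gives a strictly increasing correction along $\pi$, and its total size is at most $\delta\max\{1,\nicefrac{1}{\beta}\}$ per operation.

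First I would check that $\pi$ is still a valid, and now the only, execution of $\beta$-\textsc{SORT} on the perturbed instance $I_2(\delta)$. The set of available operations at each step depends only on which operations have already been scheduled: all tests, plus the processing operations of tested jobs. So if we follow $\pi$, the available set at each step is the same as in $I$. In $I$, the chosen operation $\pi(k)$ had the minimum priority value among the available ones. Every other available operation appears later in $\pi$, at a position $k'>k$. Ties with $\pi(k)$ are broken strictly in its favour, because $\delta_{k'}>\delta_k$. For operations whose priority value was strictly larger, we choose $\delta$ smaller than half the minimum positive gap between distinct priority values in $I$, so their order is preserved. By induction over the steps, $\pi$ is forced.

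Pairwise distinctness needs a little more care, since it must hold for all pairs of operations, not only those available at the same time. Two operations at positions $k\neq k'$ that had equal value in $I$ now differ by $\delta_{k}-\delta_{k'}\neq0$. Two operations whose values already differed keep a nonzero difference by the smallness of $\delta$. Hence all $2n$ priority values are pairwise distinct.

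Finally, since the order $\pi$ is unchanged, $ALG(I_2(\delta))$ is the cost of a fixed order, which is an affine function of the operation times. It therefore converges to $ALG(I)$ as $\delta\to0$. $OPT$ is continuous by Equation~\eqref{eq: Optimal cost}, being a sum of minima of affine functions, and $OPT(I)>0$ by the standing assumption. So the ratio is continuous at $\delta=0$, and choosing $\delta$ small enough gives $\nicefrac{ALG(I_2)}{OPT(I_2)}\geq\nicefrac{ALG(I)}{OPT(I)}-\varepsilon$.

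The main obstacle is the first step: showing that the perturbation makes the given arbitrary tie-broken order forced by the rule itself. This requires the increments to be strictly monotone along $\pi$ and to shift the priority values themselves, not just the times. That is why the test increments are divided by $\beta$. Zero operation times cause no problem, since they only increase.
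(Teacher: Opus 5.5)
Your proposal is correct and follows essentially the same route as the paper's proof: perturb the operation times so that the priority values increase strictly along the given order (adding $\delta_k$ to processing times and $\delta_k/\beta$ to testing times), then use continuity of $ALG$ for the fixed order and of $OPT$ in $\delta$. Your explicit checks that the order becomes the unique one forced by the rule and that all $2n$ priority values become pairwise distinct supply details that the paper only asserts.
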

\begin{proof}
    Consider instance $I$, $\varepsilon>0$, and a given order of operations  $o_1,o_2,\ldots o_{2n}$ in $\beta$-SORT. Perturb the operation times using a small $\delta>0$ such that the priority values increase by $0,\delta,2\delta,\ldots,(2n-1)\delta$ in this order and that all priority values differ. This means that we either add $k\delta$ or $\nicefrac{k\delta}{\beta}$ to an operation time for some $k\in\lbrace0,1,\ldots 2n-1\rbrace$. This instance is denoted by $I_{\delta}$. Using Equations~\eqref{eq: Optimal cost} and~\eqref{eq: Algorithmic cost} for optimal and algorithmic cost similar as in the proof of Lemma~\ref{lemma: shift to the bounds}, we have that $ALG(I_{\delta})$ and $OPT(I_{\delta})$ are continuous in $\delta$ and can conclude that there is a $I_2:=I_{\delta'}$ for some $\delta'>0$ such that 
    $\nicefrac{ALG(I_2)}{OPT(I_2)}\geq \nicefrac{ALG(I)}{OPT(I)}-\varepsilon$.
\end{proof}

\subsection{Comparison of Analysis Methods}\label{subsec: Limitations of Local Pairwise Bounds}

An established way to prove upper bounds on the competitive ratio of algorithms for scheduling problems with testing, and in similar uncertainty frameworks, is a local pairwise delay analysis \cite{buld2025scheduling,dogeas2024obligatory,liu2023power,motwani1994nonclairvoyant}. Bounding $d(i,j) + d(j,i)\leq \theta\cdot\min\lbrace t_i+p_i, t_j+p_j\rbrace$ for all pairs of jobs $i\neq j$ yields an upper bound of $\theta$ using Equations~\eqref{eq: Optimal cost} and~\eqref{eq: Algorithmic cost}.

It is shown by Dogeas et al.~\cite{dogeas2024obligatory} that the best possible local pairwise bound for $1$-SORT equals $\theta=2$, considering an exhaustive analysis of possible orders. This analysis can be extended for general $\beta>0$; see Buld and Schulz~\cite{buld2025scheduling} and Liu, Liu, Wong, and Zhang~\cite{liu2023power} and their corresponding full versions, which study optional testing, with pairwise bounds for obligatory testing as a special case. The following proof is tailored for obligatory testing.

\begin{lemma}\label{lemma: Pairwise Bounds}
   For $\beta>0$, algorithm $\beta$-SORT is at most $\max\lbrace1+\beta,1+\nicefrac{1}{\beta}\rbrace$-competitive for scheduling with obligatory testing to minimize $\sum C_j$ on a single machine.
\end{lemma}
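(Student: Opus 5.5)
We prove the statement through the local pairwise-delay analysis just described. Set $\theta:=\max\lbrace 1+\beta,1+\nicefrac{1}{\beta}\rbrace$. It suffices to show that
\[
d(i,j)+d(j,i)\leq \theta\cdot\min\lbrace t_i+p_i,t_j+p_j\rbrace
\]
for every pair of jobs $i\neq j$. Indeed, Equations~\eqref{eq: Optimal cost} and~\eqref{eq: Algorithmic cost} share the term $\sum_j(t_j+p_j)$, and $\theta\geq 1$, so summing this bound over all pairs gives $ALG\leq\theta\cdot OPT$. By symmetry we fix a pair and assume that job $i$ completes before job $j$ in $\beta$-SORT. Then $d(i,j)=t_i+p_i$, and $d(j,i)\in\lbrace 0,t_j\rbrace$, because $p_j$ is scheduled after $C_i$. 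If $d(j,i)=0$, the pair costs $t_i+p_i$, which equals $\min\lbrace q_i,q_j\rbrace$ for $q_k:=t_k+p_k$ whenever $q_i\leq q_j$. If instead $q_j<q_i$, we need the case analysis below.

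We split on the order of the four operations, using only the index-rule property. Every available operation with smaller priority is scheduled first. Moreover, the test of $j$ (priority $\beta t_j$) stays available from time $0$ until it is scheduled.

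\textbf{Case A:} $t_j$ is scheduled before $p_i$. The pair cost is $t_i+p_i+t_j$. Since $p_i$ was still pending when $t_j$ ran, and $t_i$ had already run (or is scheduled before $p_i$ anyway), $\beta t_j\leq p_i$ whenever $p_i$ was available at that moment. This yields $t_j\leq p_i/\beta$, so the cost is at most $t_i+(1+\nicefrac1\beta)p_i\leq\theta q_i$. For the bound against $q_j$, we use that $p_i$ came after $t_j$ and before $p_j$, hence $p_i\leq p_j$ or $p_i$ is bounded via the test ordering. We also need $t_i$: if $t_i$ ran before $t_j$, then $\beta t_i\leq\beta t_j$, so $t_i\leq t_j$. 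This gives a cost of at most $2t_j+p_i\leq 2t_j+p_j$, which we compare to $\theta q_j$. If $t_j$ preceded $t_i$, the cost is handled symmetrically.

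\textbf{Case B:} $t_j$ is scheduled after $p_i$. Then $d(j,i)=0$ and the pair cost is $q_i$. Because $t_j$ was available throughout, $\beta t_j\geq\max\lbrace\beta t_i,p_i\rbrace$. This gives $t_j\geq t_i$ and $t_j\geq p_i/\beta$, and therefore
\[
q_i=t_i+p_i\leq t_j+\beta t_j\leq(1+\beta)q_j .
\]
Together with the trivial bound $q_i\leq q_i$, this yields $q_i\leq\theta\min\lbrace q_i,q_j\rbrace$.

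The main obstacle is organizing Case A cleanly, since the relative order of $t_i$ and $t_j$ matters and $p_i$ may or may not be available at the moment $t_j$ runs. We enumerate the orders $t_it_jp_i$ and $t_jt_ip_i$ (with $p_j$ last) and extract, in each, the priority inequalities forced by availability. For example, in the order $t_j,t_i,p_i$ we get $\beta t_j\leq\beta t_i$, and $p_i\leq p_j$ because $p_j$ is available when $p_i$ is chosen. The cost $t_j+t_i+p_i$ is then at most $2q_j$ in the regime $\beta=1$. For general $\beta$, the same inequalities, combined with the case split $p_i\leq\beta t_j$ versus $p_i>\beta t_j$, give the factors $1+\beta$ and $1+\nicefrac1\beta$ respectively.
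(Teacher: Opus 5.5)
Your reduction to pairwise bounds is fine, and so is normalizing by the algorithm's completion order instead of the optimum's order as the paper does. Case B and the sub-order $\tau_i,\tau_j,\pi_i,\pi_j$ of Case A are handled correctly. The gap is in the other sub-order of Case A, $\tau_j,\tau_i,\pi_i,\pi_j$, when $q_j<q_i$. This does occur, e.g.\ for $\beta=1$, $(t_j,p_j)=(0,2)$, $(t_i,p_i)=(\nicefrac32,1)$. There you must show $t_j+t_i+p_i\le\theta q_j$.

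This case is not ``symmetric'' to the other sub-order. The two inequalities you extract, $t_j\le t_i$ and $p_i\le p_j$, cannot give the bound, because they put no upper bound on $t_i$ in terms of $q_j$. Take $t_j=p_i=0$, $p_j=1$ and $t_i$ arbitrarily large: both hold, yet the ratio to $q_j$ is unbounded. Your claimed bound $2q_j$ for $\beta=1$ therefore does not follow. The split $p_i\le\beta t_j$ versus $p_i>\beta t_j$ does not help either, since the problematic term is $t_i$. The assertion that this split produces the factors $1+\beta$ and $1+\nicefrac1\beta$ is unsupported.

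The missing idea is the availability constraint at the moment $\tau_i$ is selected. Since $\tau_j$ has already run, $\pi_j$ is in the queue, so $\beta t_i\le p_j$. Together with $p_i\le p_j$ this gives
\[
t_j+t_i+p_i\le t_j+(1+\nicefrac1\beta)p_j\le(1+\nicefrac1\beta)q_j .
\]
Separately, $t_j\le t_i$ gives $t_j+t_i+p_i\le 2q_i$. Both bounds are at most $\theta\min\lbrace q_i,q_j\rbrace$ because $\theta\ge 2$ for every $\beta>0$; you should state this, since you also rely on the factor $2$ in the first sub-order. With the labels swapped, this is exactly the paper's case $[\tau_i,\tau_j,\pi_j,\pi_i]$.

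The paper avoids any two-sided comparison by assuming $q_i\le q_j$ and bounding each of the six orders against $q_i$ alone. Your normalization also works, but then each of the three orders in which $i$ finishes first needs explicit bounds against both $q_i$ and $q_j$. In this sub-case that second bound is what is currently missing.
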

\begin{proof}
    Consider two jobs $i\neq j$ with $\min\lbrace t_i+p_i,t_j+p_j\rbrace=t_i+p_i$, i.e., job~$i$ is scheduled before job~$j$ in the optimal schedule. We bound the pairwise algorithmic delay $d(i,j)+d(j, i)$ based on a case distinction on the actions of $\beta$-SORT, similar to that in \cite{buld2025scheduling,dogeas2024obligatory,liu2023power}. Abbreviate by $\tau_j$ the testing operation and by $\pi_j$ the processing operation of job $j$ as in \cite{buld2025scheduling,dogeas2024obligatory}. We distinguish cases based on the order of operations in $\beta$-SORT.

\begin{enumerate}
    \item Order of operations $\left[\tau_i,\pi_i, \tau_j, \pi_j\right]$: $d(i,j)+d(j, i) = 1\cdot (t_i + p_i)$.
    \item $\left[\tau_i, \tau_j, \pi_i, \pi_j\right]$: $d(i,j)+d(j, i) = t_i + p_i + t_j \leq t_i + (1+\nicefrac{1}{\beta})p_i\leq(1+\nicefrac{1}{\beta})( t_i + p_i)$, tight for $t_i=0$, $t_j=\nicefrac{p_i}{\beta}$, and $p_i\leq p_j$.
    \item $\left[\tau_i, \tau_j, \pi_j, \pi_i\right]$: $d(i,j)+d(j, i) = t_i  + t_j + p_j \leq t_i + (1+\nicefrac{1}{\beta})p_i\leq(1+\nicefrac{1}{\beta})( t_i + p_i)$, tight for $t_i=0$, $t_j=\nicefrac{p_i}{\beta}$, $p_j=p_i$.
    \item $\left[\tau_j, \tau_i, \pi_i, \pi_j\right]$: $d(i,j)+d(j, i) = t_j + t_i + p_i \leq 2t_i + p_i\leq2(t_i + p_i)$, tight for $t_i=t_j$, $p_i=0$.
    \item $\left[\tau_j, \tau_i, \pi_j, \pi_i\right]$: $d(i,j)+d(j, i) = t_j + t_i + p_j \leq 2t_i + p_i\leq2(t_i + p_i)$, tight for $t_i=t_j$, $p_i=p_j=0$.
    \item $\left[\tau_j, \pi_j, \tau_i, \pi_i\right]$: $d(i,j)+d(j, i) = t_j + p_j \leq (1+\beta)t_i \leq (1+\beta)(t_i + p_i)$, tight for $t_i=t_j$ $t_i = \nicefrac{p_j}{\beta}$, $p_i=0$.
\end{enumerate}
\end{proof}

Using this local pairwise analysis, no better bound than $2$ can be shown for any $\beta>0$. To overcome this limitation, Dogeas et al.~\cite{dogeas2024obligatory} give a method to amortize the delay of pairs with local delay close to~$2$ for $1$-SORT and achieve an upper bound of $1.861$ on its competitive ratio. 
They also give a family of instances to establish a lower bound on the competitive ratio for the algorithmic family $\beta$-SORT. We will show in Section~\ref{sec: Exact Analysis Competitive Ratio} how to match this lower bound using our instance-reduction method from Section~\ref{sec: Instance-Reduction}. The tight pairs of jobs in the proof of Lemma~\ref{lemma: Pairwise Bounds} will be crucial in this.

\begin{theorem}\label{thm: Exact Competitive Ratio}
    The competitive ratio of $\beta$-SORT is exactly
    \begin{equation}
       \rho(\beta) =\left\lbrace\begin{array}{ll}
             \frac{1}{2}\left(1+\sqrt{\frac{\beta + 4}{\beta}}\right),&  0 < \beta \leq 1, \\
             \frac{1}{2\beta}\left(1+\sqrt{4\beta\left(\beta^2+\beta -1\right)+1}\right),& \beta \geq 1,
        \end{array}\right.
    \end{equation}
    and this expression is minimized at $\beta=1$ attaining a value of $\rho(1)=\varphi$.
\end{theorem}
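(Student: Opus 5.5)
The plan has three parts: an explicit lower-bound family for each $\beta$, an upper bound obtained by running the instance-reduction method of Section~\ref{sec: Instance-Reduction} until only two job types remain, and a closing calculus argument showing that the minimum is at $\beta=1$.

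\textbf{Lower bound.} The bad families are built from the tight pairs in the proof of Lemma~\ref{lemma: Pairwise Bounds}, replicated many times. For $0<\beta\le 1$ I would take $n$ jobs of type $A$ ($t=0$, $p=1$) and $m=xn$ jobs of type $B$ ($t=\nicefrac{1}{\beta}$, $p=0$). The processing operations of type $A$ and the tests of type $B$ both have priority $1$, and by Lemma~\ref{lemma: tie breaking} the adversary may break this tie in favour of the $B$-tests. Since $\beta\le 1$, OPT schedules all $A$ jobs first. The algorithm instead finishes the $B$ jobs first, so every $A$ job is delayed by $\nicefrac{m}{\beta}$ instead of delaying the $B$ jobs by $n$. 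For $n\to\infty$ the ratio tends to
\[
R(x)=\frac{\tfrac12+\tfrac{x}{\beta}+\tfrac{x^2}{2\beta}}{\tfrac12+x+\tfrac{x^2}{2\beta}} .
\]
Maximizing over $x\ge 0$ means finding the largest $\rho$ for which the quadratic $\tfrac12(1-\rho)+x(\tfrac1\beta-\rho)+\tfrac{x^2}{2\beta}(1-\rho)=0$ has a nonnegative root. Setting its discriminant to zero should give $\rho=\frac12\bigl(1+\sqrt{(\beta+4)/\beta}\bigr)$. For $\beta\ge 1$ I would do the same with the tight pair of case~6 of Lemma~\ref{lemma: Pairwise Bounds}: type $A$ with $t=1$, $p=0$ and type $B$ with $t=1$, $p=\beta$, with $B$-processing tied to $A$-tests and the tie broken adversarially. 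Optimizing the analogous two-type ratio should yield the second branch. If it does not, I would add a third type whose test time is scaled by $\nicefrac1\beta$, and check the formula against the family of Dogeas et al.~\cite{dogeas2024obligatory}.

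\textbf{Upper bound: reduction.} First, by Lemma~\ref{lemma: tie breaking} we may assume all priority values are distinct, up to an arbitrarily small loss. Next, I group the operations by priority value and apply Lemma~\ref{lemma: shift to the bounds} in the form of Remark~\ref{remark: technical aspects lemma 2}. Each class $S$ with common priority value $y$ is shifted jointly to the neighbouring priority value, either the next smaller or the next larger one. The order is preserved on the open interval, so pairwise delays stay affine, and the ratio does not decrease at one of the endpoints. When two classes merge, the adversarial exchange argument resolves the resulting tie in the order that is worse for the algorithm, while OPT depends only on the $t_j+p_j$ and is continuous. Iterating, the number of distinct priority values drops strictly each round. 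The process terminates with an instance in which every operation has priority $0$ or $y$; after scaling, $y=1$, so $t_j\in\{0,\nicefrac1\beta\}$ and $p_j\in\{0,1\}$. This leaves at most four job types.

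\textbf{Upper bound: exact computation, and the main obstacle.} On the four-type instances, ALG is determined: zero-priority operations run first, then the priority-$1$ operations in the adversarial order. OPT follows from equation~\eqref{eq: Optimal cost} and depends on whether $\nicefrac1\beta\le 1$. Both costs are quadratic forms in the type counts, so the ratio is a ratio of quadratic forms. Its supremum is the largest $\rho$ for which $ALG-\rho\, OPT\ge 0$ has a nonnegative solution. The main obstacle is to show that types $(0,0)$ and $(\nicefrac1\beta,1)$ can be removed, or merged into the two extremal types, without decreasing the ratio. This is the point where the limitation of the local pairwise analysis must be overcome. I would first try a further application of Lemma~\ref{lemma: maximum at the bounds}, viewing the number of jobs of a type as the parameter; the ratio is linear-over-concave in the fraction of such jobs only after fixing the rest, so this may need the adversarial reordering once more. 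Once two types remain, the problem reduces to the quadratic from the lower-bound part, which gives matching upper and lower bounds.

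\textbf{Minimization.} Finally, both branches take the value $\varphi$ at $\beta=1$, since $\sqrt5$ appears in each. The first branch is decreasing in $\beta$ because $(\beta+4)/\beta$ decreases. For the second branch, differentiating shows it is increasing for $\beta\ge 1$. Hence $\rho$ is minimized at $\beta=1$.
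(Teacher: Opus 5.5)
You have the overall plan right (shift to at most four job types, then compute on the reduced class), but the choice of which job types form the worst case is wrong for $\beta\le 1$. This breaks both your lower bound for that branch and the target of your upper-bound reduction.

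\textbf{Lower bound for $0<\beta\le 1$.} Your family does not give the first branch. For types $(0,1)$ and $(\nicefrac{1}{\beta},0)$, the discriminant of your quadratic vanishes iff $(1-\rho\beta)^2=\beta(1-\rho)^2$, i.e.\ $\rho=1/\sqrt{\beta}$, attained at $x=\sqrt{\beta}$. This is strictly below $\frac12\bigl(1+\sqrt{(\beta+4)/\beta}\bigr)$ and equals $1$ at $\beta=1$. That pair has pairwise ratio only $\nicefrac{1}{\beta}$ and is not a tight pair of Lemma~\ref{lemma: Pairwise Bounds}. The tight pair for the bound $1+\nicefrac{1}{\beta}$ (cases 2 and 3) is $(0,1)$ together with $(\nicefrac{1}{\beta},1)$, i.e.\ the paper's $(0,\beta),(1,\beta)$. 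Take a fraction $\lambda$ of the second type and schedule all tests before any processing. The limiting ratio is then $\frac{\beta+2\lambda}{\beta+\lambda^2}$, maximized at $\lambda^*=\frac12\bigl(\sqrt{\beta^2+4\beta}-\beta\bigr)$ with value $1/\lambda^*$, which is the first branch.

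Your $\beta\ge 1$ family $(1,0),(1,\beta)$ is the right one. Note, though, that all three kinds of operations there have priority $\beta$. The adversary must test all $(1,\beta)$ jobs before processing any of them; testing and processing them one at a time gives only $\sqrt{2}$ at $\beta=1$.

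\textbf{Upper bound.} The same misidentification breaks this part. Your shift reduction is essentially Theorem~\ref{thm: Instance Reduction}; after multiplying by $\beta$, your four types are the paper's $(0,0),(0,\beta),(1,0),(1,\beta)$. The type you want to eliminate, $(\nicefrac{1}{\beta},1)$, is the paper's $(1,\beta)$, which occurs in the worst-case family for every $\beta$. Among the reduced instances, those without it have ratio at most $\max\{\sqrt{\beta},1/\sqrt{\beta}\}$, which is $1$ at $\beta=1$. So no ratio-preserving elimination of that type exists.

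What can be eliminated is $(0,\beta)$ when $\beta\ge 1$ and $(1,0)$ when $\beta\le 1$ (Theorem~\ref{thm: worst-case families}). This is exactly the step you leave open, and the paper does not handle it with another application of Lemma~\ref{lemma: maximum at the bounds}. Instead, it writes the limiting ratio as $A/B$, with $A,B$ quadratic in the type fractions $(\mu,\lambda,\sigma)$. It then compares converting all jobs of the removable type into each of the two remaining types, and shows that one of the two conversions never decreases $A/B$. For $\beta>1$:
\begin{itemize}
\item conversion into $(1,\beta)$ is good iff $A/B\le\frac{2}{2\lambda+\mu}$;
\item conversion into $(1,0)$ is good iff $A/B\ge\frac{2}{2\lambda+\mu}-1$;
\end{itemize}
and one of these always holds. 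The case $\beta<1$ is analogous.

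You also need Lemma~\ref{lemma: Adversarial Exchange Lemma}, i.e.\ that a single order maximizes all pairwise delays simultaneously, for ``the adversarial order'' on the three-type instances to be well defined.

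Your monotonicity argument for the minimum at $\beta=1$ is fine.
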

The proof of Theorem~\ref{thm: Exact Competitive Ratio} is relegated to Section~\ref{sec: Exact Analysis Competitive Ratio}.
\begin{remark}
    In Figure~\ref{fig:Bounds on the Competitive Ratio}, we depict various bounds on the competitive ratio as $\beta>0$ varies. These include the local pairwise delay bound from Lemma~\ref{lemma: Pairwise Bounds}, the previous best bound by Dogeas et al.~\cite{dogeas2024obligatory} for $1$-SORT, and the exact competitive ratios obtained in this paper. Interestingly, both the local pairwise delay and the exact competitive ratio attain their minima at $\beta=1$, with values of $2$ and $\varphi$, respectively. Furthermore, this shows that a local pairwise delay analysis is not sufficient for this algorithmic family.
\end{remark}

\begin{figure}[htb]
    \centering
    \includegraphics[width=0.66\linewidth]{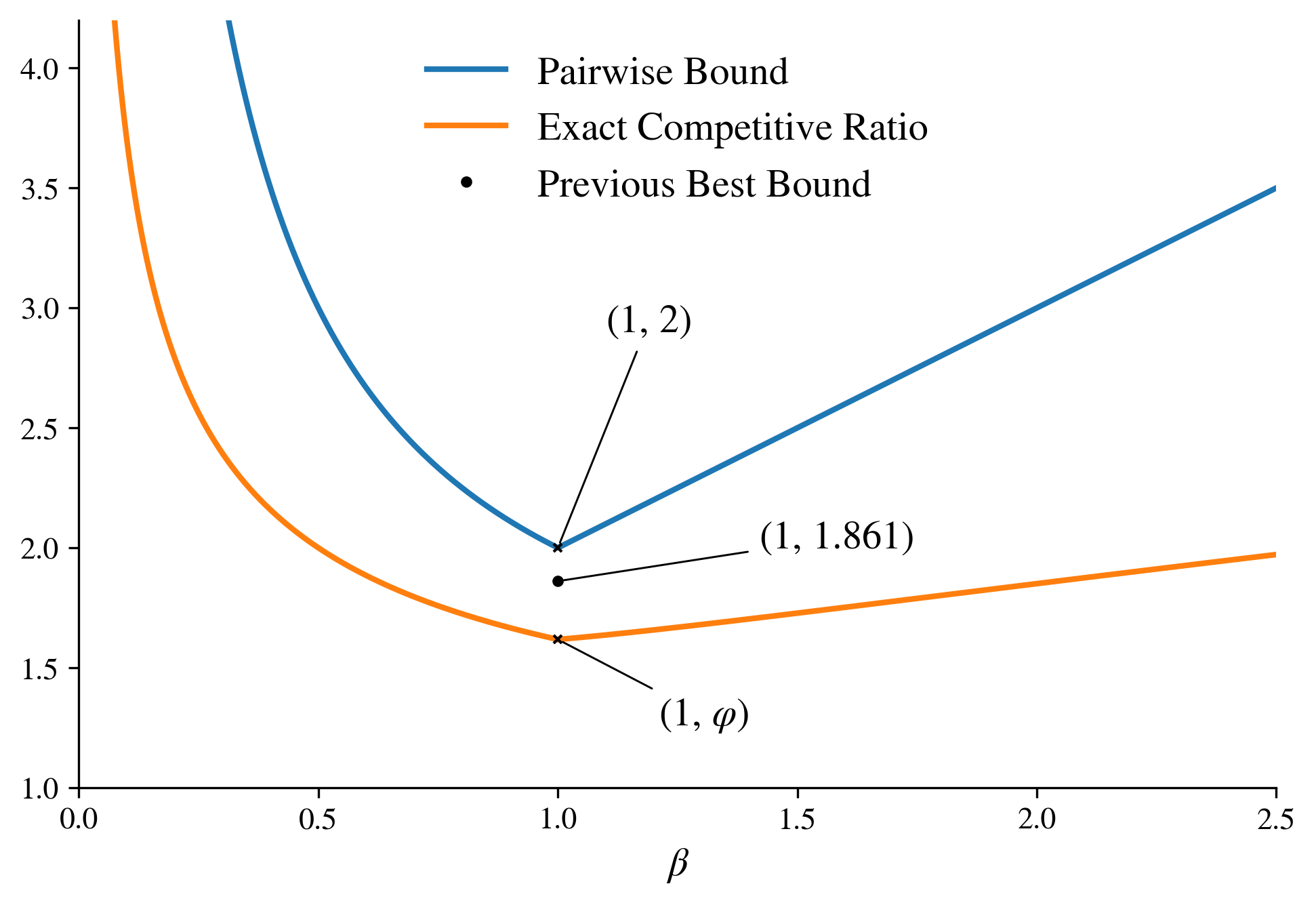}
    \caption{Bounds on the competitive ratio using previous approaches and our instance-reduction method.}
    \label{fig:Bounds on the Competitive Ratio}
\end{figure}

 Note that our exact competitive ratio of $\varphi$ for $1$-SORT for arbitrary testing times almost matches the lower bound of about $1.57$ \cite{rozhon2026test} for any deterministic algorithm and uniform testing times.

\subsection{Exact Analysis of the Competitive Ratio of $\beta$-SORT}\label{sec: Exact Analysis Competitive Ratio}
Using our instance-reduction method from Section~\ref{sec: Instance-Reduction}, we can show that there are worst-case instances with only a few distinct parameters in the operation times. First, we iteratively merge groups of operations with the same priority value. Then, we apply adversarial exchange arguments to obtain worst-case instances for which we compute the competitive ratio exactly. 

\begin{theorem}\label{thm: Instance Reduction}
    For each instance $I$ of the obligatory testing problem, and any order of operations consistent with Definition~\ref{def: Beta-SORT}, there is an instance~$I'$ such that $\beta$-SORT yields the same order of operations and $\nicefrac{ALG(I)}{OPT(I)}\leq \nicefrac{ALG(I')}{OPT(I')}$, where each job $j$ is of the structure as follows: 
\end{theorem}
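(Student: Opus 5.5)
The plan is to reach the structure by repeatedly applying Lemma~\ref{lemma: shift to the bounds}, in the form of Remark~\ref{remark: technical aspects lemma 2}, to reduce the number of distinct priority values. I expect the structure to be that every testing time lies in $\lbrace 0,\nicefrac{1}{\beta}\rbrace$ and every processing time in $\lbrace 0,1\rbrace$ after scaling. This is my guess at the truncated list, based on the tight pairs in Lemma~\ref{lemma: Pairwise Bounds}.

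\textbf{Normalization.} Fix $I$ and an order of operations consistent with Definition~\ref{def: Beta-SORT}. By Lemma~\ref{lemma: tie breaking} it would suffice to treat instances with distinct priorities, up to $\varepsilon$. I would rather work directly with the fixed order and allow ties, so that the conclusion holds without $\varepsilon$ and the order is preserved. List the distinct positive priority values occurring in the run as $0<y_1<y_2<\dots<y_m$. Let $S_k$ be the set of operations with priority $y_k$: tests with $\beta t_j=y_k$ and processing operations with $p_j=y_k$.

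\textbf{Iterative merging.} Take the largest class $S_m$ and shift all its operation times linearly so that their common priority equals $x\in[y_{m-1},u]$, keeping all other operations fixed. As long as $x$ stays in $[y_{m-1},\infty)$, the relative order of $\beta$-SORT is unchanged, since ties at the endpoint can be broken as in the original order. By the remark, each pairwise delay $d_x(i,j)+d_x(j,i)$ is a sum of operation times and hence affine, non-decreasing and non-negative at $0$. The same holds for each $q_j(x)$.

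To get a compact interval, one option is to bound $u$ by a large value and observe the ratio is monotone beyond the last breakpoint of $OPT$. The simpler option is to pick the lower endpoint $x=y_{m-1}$ and compare with shifting $S_{m-1}$ up to $y_m$. I would do the latter: for the class $S_k$ with $1\le k\le m$, consider $x\in[y_{k-1},y_{k+1}]$ (with $y_0=0$). Lemma~\ref{lemma: shift to the bounds} says one endpoint does not decrease the ratio. Either endpoint merges $S_k$ into a neighbouring class or into priority $0$, so the number of distinct positive priorities strictly decreases.

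Iterating on interior classes until $m\le 1$ leaves each operation with priority $0$ or a common value $y$. Scaling by $\nicefrac{1}{y}$ changes neither the ratio nor the order. Then $t_j\in\lbrace 0,\nicefrac{1}{\beta}\rbrace$ and $p_j\in\lbrace0,1\rbrace$, and the order is still the original one.

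\textbf{Main obstacle.} The delicate point is the top class $S_m$, where there is no natural upper neighbour. Here I would argue that the ratio as a function of $x\to\infty$ is affine over eventually affine. So either the maximum over $[y_{m-1},u]$ is at $y_{m-1}$ for all large $u$, or the ratio is non-decreasing in $x$ from some point on. In the second case, equation~\eqref{eq: smaller instance class same ratio} only needs the supremum, so a limiting instance may be needed and the claimed inequality might hold only up to $\varepsilon$ there.

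A second subtlety is checking that merging two classes at a tie really keeps a consistent $\beta$-SORT order. This holds because ties may be broken arbitrarily and we keep the original relative order. Also, $OPT(l)>0$ must hold throughout, which is guaranteed as long as some operation time is positive. If the intended structure in the statement is finer (for instance, specifying which combinations $(t_j,p_j)$ occur), I would add an adversarial exchange step at the end, dropping dominated job types.
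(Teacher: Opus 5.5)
Your proposal is correct and follows the paper's proof: shift each interior priority class over $[y_{k-1},y_{k+1}]$ via Lemma~\ref{lemma: shift to the bounds} and Remark~\ref{remark: technical aspects lemma 2}, merge it into a neighbour, iterate until one positive priority remains, and rescale. After multiplying all times by $\beta$ and deleting the trivial $(0,0)$ jobs, this gives exactly the paper's types $(0,\beta),(1,\beta),(1,0)$; the paper's extra preliminary step forcing $p_j\in\lbrace 0,\beta t_j\rbrace$ when $0<p_j<\beta t_j$ is unnecessary given your weak-order argument. Your worry about the top class is moot, since only interior classes are ever shifted, so the inequality holds exactly with no $\varepsilon$-loss and no adversarial exchange step is needed for this statement.
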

    \begin{table}[htb]
        \centering
        \caption{Worst-case structure for $\beta$-SORT.}
        \label{tab:worst-case example}
        \begin{tabular}{c|c|c|c}
        Job Parameters & Type $1$ & Type $2$ & Type $3$ \\
        \hline
        $(t_j,p_j)$ & $(0,\beta)$ & $(1,\beta)$ & $(1,0)$ 
        \end{tabular}
    \end{table}
\begin{proof}
Let $I$ be an arbitrary instance, and $X(I):=(\lbrace \beta t_1, \beta t_2,\ldots, \beta t_n\rbrace\cup\lbrace p_1,p_2,\ldots,p_n\rbrace)\setminus\lbrace 0 \rbrace$ denote the set of strictly positive priority values in $\beta$-SORT for all operations (for $1$-SORT, these are exactly the operation times). 

We rescale all operation times by the maximal priority value $x_{\max}:=\max_{x_i\in X(I)} x_i$, such that we have $x_{\max}=\beta$ and $x_i\in\left[ 0,\beta\right]$. This is without loss of generality, since the order of priority values and the ratio remain unchanged.

If $\vert X(I)\vert = 1$, we are done. Otherwise, we will use our instance-reduction method to construct a series of instances $I_1, I_2, \ldots, I_k$ such that $\vert X(I)\vert \geq \vert X(I_1)\vert \ldots \geq \vert X(I_k)\vert =1$ and $\nicefrac{ALG(I)}{OPT(I)}\leq \nicefrac{ALG(I_1)}{OPT(I_1)}\leq \ldots \leq \nicefrac{ALG(I_k)}{OPT(I_k)}$ for some $k\leq \vert X(I)\vert -1 \leq 2n -1$ and the result follows with $I':= I_k$.

We use Lemma~\ref{lemma: shift to the bounds} via Remark~\ref{remark: technical aspects lemma 2}. In particular, if we select a subset of operations and shift their priority values in $\beta$-SORT uniformly while preserving the order of operations, especially for tie-breaks, the pairwise delay $d(i,j) + d(j,i)$ between two jobs $i\neq j$ is affine in the shift.

First, note that for each job $j$ with $0<p_j<\beta t_j$, $\beta$-SORT schedules the processing operation of job $j$ immediately after $j$'s test. Hence, by applying Lemma~\ref{lemma: shift to the bounds} independently to each such job $j$ with interval $\left[ 0, \beta t_j \right]$, we can conclude that we can restrict ourselves to $p_j\in\lbrace 0,\beta t_j\rbrace$ for these jobs without changing the algorithmic order and without worsening the competitive ratio. Let the resulting instance be denoted by $I_1$.

Second, for the instance $I_1$, we write $0 =: x_0 < x_1 < x_2 < \ldots < x_{\vert X(I_1)\vert}$ for all priority values in $x_i\in X(I_1)\cup\lbrace 0 \rbrace$ that $\beta$-SORT uses on this instance. Pick an arbitrary index $i\in\lbrace 1, \ldots, \vert X(I_1)\vert -1\rbrace$ together with all operations of priority values $x_i$. Apply Lemma~\ref{lemma: shift to the bounds} to all these operations and the interval $\left[ x_{i-1}, x_{i+1}\right]$. Note that this shift in operation times does not change the relative order of operations since $\beta$-SORT always uses the smallest priority value, and if there is a job with $p_j=0$, it will still be scheduled immediately after the test of job $j$. Hence, by Lemma~\ref{lemma: shift to the bounds}, we obtain $I_2$ from $I_1$ with $ \vert X(I_1)\vert  > \vert X(I_2)\vert$ and $\nicefrac{ALG(I_1)}{OPT(I_1)}\leq \nicefrac{ALG(I_2)}{OPT(I_2)}$ by shifting all operations of priority value $x_i$ to a priority value of either $x_{i-1}$ or $x_{i+1}$. 

Iteratively following this approach, we arrive at an instance $I'$ with $\nicefrac{ALG(I)}{OPT(I)}\leq \nicefrac{ALG(I')}{OPT(I')}$ and $\vert X(I')\vert = 1$. Hence, instance $I'$ only has priority values $0$ and $\beta$, implying that for all jobs $j$, we have $t_j\in\lbrace 0, 1\rbrace$ and $p_j\in\lbrace 0, \beta\rbrace$. Removing jobs with $(t_j,p_j)=(0,0)$ that neither contribute to the algorithmic nor the optimal cost, leaves us with the pairs of operation times $(t_j,p_j)\in\lbrace (0, \beta), (1,\beta), (1,0) \rbrace$ as stated.
\end{proof}

Theorem~\ref{thm: Instance Reduction} shows that we can restrict our attention to instances of the three types above. These are exactly the job pairs where at least one pairwise delay bound is tight; see Lemma~\ref{lemma: Pairwise Bounds}. Since this step does not change the algorithmic order, we must consider adversarial tie-breaking when analyzing the competitive ratio for this class of instances. This is established by Lemma~\ref{lemma: Adversarial Exchange Lemma} using adversarial exchange arguments.

\begin{lemma}\label{lemma: Adversarial Exchange Lemma}
    The worst-case schedule of $\beta$-SORT on instances in Table~\ref{tab:worst-case example} is structured as follows: First jobs of type $(0,\beta)$ are tested instantly, then jobs of type $(1,\beta)$ are tested, then,
    \begin{enumerate}
        \item if $\beta\geq 1$, all of these jobs are completed in arbitrary order, and then all jobs of type $(1,0)$ are tested and immediately completed.
        \item if $\beta\leq 1$, all jobs of type $(1,0)$ are tested and immediately completed, and then all jobs of type $(0,\beta)$ and $(1,\beta)$ are completed in arbitrary order.
    \end{enumerate}
\end{lemma}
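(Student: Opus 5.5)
We first pin down which orders $\beta$-SORT can produce on instances of Table~\ref{tab:worst-case example}, and then resolve every remaining tie in favour of the adversary. The priority values are as follows. Tests of type~$1$ jobs $(0,\beta)$ have value $0$. Tests of types $2$ and $3$ have value $\beta$. Processing operations of types~$1$ and~$2$ have value $\beta$. Processing operations of type~$3$ have value $0$.

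The forced part of the order is then immediate. All type-$1$ tests (value $0$, zero length) come first; since their processing operations have value $\beta$, they go into the queue. Every type-$3$ processing operation has value $0$, so it is scheduled right after its own test. This leaves a single tie class at value $\beta$, consisting of the tests of types~$2$ and~$3$ and the processing operations of types~$1$ and~$2$. Any interleaving in this class is consistent with Definition~\ref{def: Beta-SORT}, subject only to each type-$2$ processing operation following its own test.

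The adversarial choice is the main step. We argue by an exchange argument on adjacent operations in the tie class, showing that each swap toward the claimed order does not decrease $ALG$. Since the optimal schedule depends only on the instance, $OPT$ is unchanged, so the ratio does not decrease; ties already permitted by the definition need no $\varepsilon$-perturbation. For each swap we use the pairwise-delay formula~\eqref{eq: Algorithmic cost}: swapping two adjacent operations only changes the delay within that one pair of jobs.

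\begin{enumerate}
  \item \emph{Type-$2$ tests before completions.} Take an adjacent pair consisting of a type-$2$ test $\tau_j$ and a processing operation $\pi_i$ of a type-$1$ or type-$2$ job $i$, with $\pi_i$ first. Moving $\tau_j$ earlier changes the delay $d(j,i)$ from $0$ to $t_j=1$. The delay $d(i,j)$ stays $\beta$, since $i$ still completes before $j$. So the swap increases $ALG$, and all type-$2$ tests can be moved before every processing operation of types~$1$ and~$2$.
  \item \emph{Relative order of the completions.} The order among the value-$\beta$ processing operations of types~$1$ and~$2$ does not matter. For two such jobs, whichever finishes first delays the other by $\beta$, plus the earlier test time of type-$2$ jobs, which is already incurred. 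This is the ``arbitrary order'' in the statement.
  \item \emph{Placement of a type-$3$ block.} Consider a type-$3$ block $\tau_k\pi_k$ of length $1$ adjacent to a processing operation $\pi_i$ of length $\beta$. Putting the block first adds delay $1$ to job $i$. Putting $\pi_i$ first adds delay $\beta$ to job $k$. So type-$3$ jobs go last if $\beta\ge 1$ and first if $\beta\le1$.
  \item \emph{Type-$3$ versus type-$2$ tests.} A type-$3$ block adjacent to a type-$2$ test $\tau_j$ has pairwise delay $1$ in either order, because job $j$ is not yet complete. The swap is therefore neutral, and we may move the type-$3$ block freely past type-$2$ tests.
\end{enumerate}

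Combining these exchanges yields exactly the two structures stated in the lemma. For $\beta\ge1$, all type-$3$ blocks are pushed to the end. For $\beta\le1$, they are placed after the type-$2$ tests and before all completions.

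The main obstacle is making the exchange argument rigorous as a global rearrangement rather than a collection of local swaps. The swaps must terminate, and each must keep the order consistent with the definition; in particular, $\pi_j$ must never be moved before $\tau_j$. To handle this, we fix a target order and use bubble-sort-style adjacent transpositions within the tie class. Every transposition is either neutral or increases $ALG$ by the computations above. Each one is also feasible, since the only precedence constraint in the class is $\tau_j$ before $\pi_j$, and the target order respects it.
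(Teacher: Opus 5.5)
Your overall strategy works, but item~4 contains a wrong computation. Take a type-$2$ job $j$ and a type-$3$ job $k$. If $\tau_j$ comes before the block $\tau_k\pi_k$ and $\pi_j$ after it, then job $k$ is delayed by $t_j=1$ and job $j$ by the block's length $1$. So $d(j,k)+d(k,j)=2$. If the block comes before $\tau_j$, the pairwise delay is only $1$. The swap is therefore not neutral, and type-$3$ blocks cannot be moved ``freely'' past type-$2$ tests: pulling a block in front of a type-$2$ test lowers $ALG$ by $1$. This is the entry $2\mid 1$ in Table~\ref{tab: beta < 1}, which drives the $\beta<1$ worst case.

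The argument is rescued by the direction of the swaps you actually need. In both target orders every type-$2$ test precedes every type-$3$ block, so bubble sort only ever moves $\tau_j$ ahead of a block, and that increases $ALG$. Your closing claim that every transposition is neutral or increasing is therefore true, just not for the reason you give. A smaller slip: in item~1, for a type-$2$ job $i$ one has $d(i,j)=1+\beta$, not $\beta$; only its invariance matters.

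With that fix, your proof is correct but organized differently from the paper's. The paper performs no transpositions. It uses three facts:
\begin{itemize}
\item $ALG$ is a constant plus $\sum_{i<j}d(i,j)+d(j,i)$;
\item each pairwise term depends only on the relative order of the two jobs' operations;
\item for each of the six type pairs, the stated order attains the maximum over all relative orders consistent with $\beta$-SORT.
\end{itemize}
Hence the stated schedule maximizes every term simultaneously, and since $OPT$ is fixed, it is a worst case.

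This sidesteps what you call the main obstacle, namely termination and feasibility of intermediate orders, and it yields the table entries directly. It would also have caught the item~4 slip: the three admissible relative orders of a type-$2$ and a type-$3$ job give $1$, $2$, and $1+\beta$. Your swap argument rests on the same decomposition, since a swap changes only one pair's delay. It gives a constructive monotone path from any admissible order to the target, at the cost of extra bookkeeping.
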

\begin{proof}
    The proof is a straightforward exchange argument; using that the optimal cost is constant, and the pairwise delay $d(i,j)+d(j,i)$ is maximized for any two jobs $i\neq j$ of these three types when operations are scheduled exactly in the stated order for the two cases. Hence, all pairwise delays are maximized simultaneously in the described schedules.
\end{proof}

We first use Theorem~\ref{thm: Instance Reduction} and Lemma~\ref{lemma: Adversarial Exchange Lemma} to prove our main result for $\beta=1$ and then extend the proof strategy to the other cases of $\beta\neq 1$ afterwards.

\begin{theorem}\label{thm: 1-SORT}
    $1$-SORT is $\varphi$-competitive for scheduling with obligatory testing on a single machine to minimize $\sum C_j$.
\end{theorem}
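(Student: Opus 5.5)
By Theorem~\ref{thm: Instance Reduction} and Lemma~\ref{lemma: tie breaking}, it suffices to bound the ratio on instances consisting only of the three job types of Table~\ref{tab:worst-case example} with $\beta=1$. Here the types are $(0,1)$, $(1,1)$ and $(1,0)$; let their multiplicities be $a$, $b$ and $c$. By Lemma~\ref{lemma: Adversarial Exchange Lemma}, both cases coincide at $\beta=1$, so we may fix the worst-case schedule. First the $a$ zero-length tests are performed at time $0$. Then the $b$ unit tests are performed, ending at time $b$. Then the $a+b$ unit processing operations run, and finally the $c$ jobs of type $(1,0)$ are tested and completed one by one. Both costs then become explicit polynomials in $(a,b,c)$:
\begin{align*}
ALG &= (a+b)b + \tfrac{(a+b)(a+b+1)}{2} + c(a+2b) + \tfrac{c(c+1)}{2},\\
OPT &= \tfrac{(a+c)(a+c+1)}{2} + b(a+c) + b(b+1).
\end{align*}
Here $OPT$ follows from Equation~\eqref{eq: Optimal cost}: the $a+c$ jobs of total length $1$ come first, followed by the $b$ jobs of length $2$.

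The goal is to show $\varphi\cdot OPT - ALG \geq 0$ for all non-negative integers $a,b,c$. I would split both costs into a homogeneous quadratic part and a linear part. The linear parts are $ALG_\ell=\tfrac{a+b+c}{2}$ and $OPT_\ell=\tfrac{a+c}{2}+b$. Clearly $\varphi\, OPT_\ell\ge ALG_\ell$, so it remains to prove that the quadratic form
\[
Q(a,b,c)=\varphi\Bigl(\tfrac{(a+c)^2}{2}+b(a+c)+b^2\Bigr)-\Bigl(\tfrac{(a+b)^2}{2}+ab+b^2+ac+2bc+\tfrac{c^2}{2}\Bigr)
\]
is copositive, i.e.\ non-negative on $\mathbb{R}^3_{\ge 0}$. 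Since $Q$ is homogeneous, I would normalize (e.g.\ $b=1$) and treat $\frac{ALG_q}{OPT_q}$ as a ratio of quadratics in the two remaining variables. I would then maximize it by checking its stationary points and the boundary faces $a=0$ and $c=0$. The cross terms involving $c$ are favorable to $OPT$: the $ac$ terms cancel except for a factor $\varphi$, and $2bc$ is to be compared with $\varphi bc$. So I expect the worst case to lie near small $c$ relative to $b$, possibly at a specific interior ratio. On the face $c=0$ (with $a=1$), the ratio is $\frac{1/2+2x+\frac32x^2}{1/2+x+x^2}$ in $x=b/a$. Setting its derivative to zero yields a quadratic equation whose maximum value should be exactly $\varphi$, i.e.\ the maximal ratio solves $r^2-r-1=0$.

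The main obstacle is verifying copositivity of $Q$ in full three variables rather than on faces. I would first try to write $Q$ as a sum of squares plus non-negative monomials in $a,b,c$, using $\varphi^2=\varphi+1$ to simplify the coefficients. If that fails, I would fix the ratio $b:(a+c)$ and show that $Q$ is minimized at $c=0$ or $a=0$, since $Q$ is quadratic in $c$ for fixed $a+c$. Finally, for tightness I would take $c=0$, $a$ and $b$ large with $b/a$ equal to the maximizer found above, and use rounding. This shows the ratio tends to $\varphi$, matching the lower bound of Dogeas et al.~\cite{dogeas2024obligatory}, so the bound of Theorem~\ref{thm: 1-SORT} is exact.
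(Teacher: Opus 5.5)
Your proposal is correct and is essentially the paper's proof: you reduce to the three job types, fix the adversarial schedule from Lemma~\ref{lemma: Adversarial Exchange Lemma}, compute $ALG$ and $OPT$ explicitly, and bound the ratio --- the paper does this through the fraction $\lambda=b/n$ and by maximizing $\frac{2\lambda+1}{\lambda^2+1}$, while you prove $\varphi\,OPT-ALG\geq 0$ directly. The ``main obstacle'' you anticipate does not arise: both quadratic parts depend only on $a+c$ and $b$, so $a$ and $c$ are interchangeable and your expectation that small $c$ is worst is unfounded, and in fact $Q=\frac{1}{2\varphi}\bigl(a+c-\frac{b}{\varphi}\bigr)^2$, so your sum-of-squares route succeeds at once.
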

\begin{proof}
    Following Theorem~\ref{thm: Instance Reduction}, we can focus on instances with three types of jobs, with parameters $(0,1),(1,1),(1,0)$ in the case of $\beta=1$. By Lemma~\ref{lemma: Adversarial Exchange Lemma}, Case 1, $1$-SORT schedules all tests of jobs of type $(0,1)$ immediately, followed by all tests of jobs of type $(1,1)$, and then completes all jobs in arbitrary order, taking time $1$ each. In the optimum, the jobs of types $(0,1)$ and $(1,0)$ are completed in arbitrary order first, then all jobs of type $(1,1)$ are completed without preemption between testing and processing. In Table~\ref{tab: beta = 1}, we explicitly write down the pairwise algorithmic and optimal delays $d(i,j)+d(j,i) \mid d^*(i,j)+d^*(j,i)$ for jobs $i\neq j$ of the two types.
    
    \begin{table}[htb]
        \centering
       \caption{Pairwise delays of $1$-SORT and the optimal schedule for the worst-case instances.}       
       \label{tab: beta = 1}
        \begin{tabular}{cc|r@{\;$\mid$\;}l r@{\;$\mid$\;}l r@{\;$\mid$\;}l}
        && \multicolumn{2}{c}{$(0,1)$} & \multicolumn{2}{c}{$(1,1)$} & \multicolumn{2}{c}{$(1,0)$} \\
        \hline
        Type $1$ &$(0,1)$ & $1$ & $1$ & $2$ & $1$ & $1$ & $1$ \\
        Type $2$ &$(1,1)$ & \multicolumn{2}{c}{} & $3$ & $2$ & $2$ & $1$ \\
        Type $3$ &$(1,0)$ & \multicolumn{2}{c}{} & \multicolumn{2}{c}{} & $1$ & $1$
        \end{tabular}
    \end{table}
    
    Consider an arbitrary instance $I$ consisting of $\mu n, \lambda n,\sigma n$ jobs of types $(0,1),(1,1),(1,0)$, where $\mu,\lambda,\sigma\geq 0$ and $\mu+\sigma=1-\lambda$. Hence, using Equation~\eqref{eq: Algorithmic cost} and Table~\ref{tab: beta = 1}:  
    \begin{equation}
        \begin{aligned}
            ALG &=  n\left((1-\lambda) + 2\lambda\right) + 2\lambda(1-\lambda)n^2 + 3\frac{\lambda n(\lambda n -1)}{2} + \frac{(1-\lambda)n((1-\lambda)n -1)}{2}\\
            &=  \frac{n^2}{2} \left( 2\lambda +1 \right) + \frac{n}{2} \ ,
        \end{aligned}
    \end{equation}
    and using Equation~\eqref{eq: Optimal cost} and Table~\ref{tab: beta = 1}:
        \begin{equation}
        \begin{aligned}
         OPT &=  n\left((1-\lambda) + 2\lambda\right) + \lambda(1-\lambda)n^2 + 2\frac{\lambda n(\lambda n -1)}{2} + \frac{(1-\lambda)n((1-\lambda)n -1)}{2}\\
         &=  \frac{n^2}{2} \left( \lambda^2 +1 \right) + \frac{n}{2}\left( \lambda  + 1 \right) \ .
        \end{aligned}
    \end{equation}
    We can bound
    \begin{equation}\label{eq: ratio computation beta = 1}
        \frac{ALG}{OPT} \leq \frac{2\lambda +1}{\lambda^2 +1} 
    \end{equation}
    for all such instances. For each fixed $\lambda$, the competitive ratio of $1$-SORT restricted to the class of instances with this parameter is bounded by the right-hand side. Moreover, for fixed $\lambda$, the ratio approaches the latter ratio as $n\rightarrow\infty$, so we can focus on the coefficients of the quadratic terms. The derivative is equal to $\nicefrac{2(-\lambda^2-\lambda +1)}{(\lambda^2 +1)^2}$, and the positive root of the latter quadratic numerator set to $0$ is equal to $\lambda^*=\nicefrac{(\sqrt{5}-1)}{2}=\varphi-1=\nicefrac{1}{\varphi}\approx 0.618$. Hence, we can calculate, using $\varphi^2-\varphi-1=0$ twice,
    \begin{equation}\label{eq:Golden Ratio Calculation}
        \rho(1)-\varphi = \frac{2(\varphi -1) +1}{ (\varphi -1)^2 +1} - \varphi = \frac{2\varphi -1}{\varphi^2 -2\varphi+2}-\varphi = \frac{2\varphi -1}{3-\varphi}-\varphi = \frac{\varphi^2-\varphi -1}{3-\varphi} = 0 \ ,
    \end{equation}
    and obtain the competitive ratio of $\rho(1)=\varphi$.
\end{proof}

We have proven that $1$-SORT is $\varphi$-competitive. Note that, for the identified instances, some pairs still have pairwise delays that are off the optimal pairwise delays for such jobs by a factor of $2$; see Table~\ref{tab: beta = 1}. However, there are pairs of ratios $3/2$ and $1$, as well, which offset the overall ratio again. Following the proof of Theorem~\ref{thm: 1-SORT}, the limiting worst-case instance has a fraction of $\lambda^* = \nicefrac{1}{\varphi}$ jobs of type~$(1,1)$ and a fraction of $1-\lambda^*$ jobs that is an arbitrary combination of the other two types.

For the cases of $\beta\neq 1$, we can extend the proof strategy of Theorem~\ref{thm: 1-SORT} to further reduce the instances to be considered and then also perform a single-parameter optimization.

\begin{theorem}\label{thm: worst-case families}
Consider $\beta$-SORT with an adversarial tie-breaking rule as in Lemma~\ref{lemma: Adversarial Exchange Lemma}.

If $\beta \geq 1$, then $\beta$-SORT has the same competitive ratio when the instance class is restricted to jobs of uniform testing time and two distinct processing times, i.e., for $\Psi_\beta := \lbrace I \vert (t_j,p_j)\in\lbrace(1,\beta),(1,0)\rbrace \textup{ for all } j\rbrace$, we have $\rho(\beta)=\sup_{I\in \Psi_\beta} \nicefrac{ALG(I)}{OPT(I)}$.

If $\beta \leq 1$, then $\beta$-SORT has the same competitive ratio when the instance class is restricted to jobs of uniform processing time and two distinct testing times, i.e., for $\Psi_\beta:= \lbrace I \vert (t_j,p_j)\in\lbrace(0,\beta),(1,\beta)\rbrace \textup{ for all } j\rbrace$, we have $\rho(\beta)=\sup_{I\in \Psi_\beta} \nicefrac{ALG(I)}{OPT(I)}$.
\end{theorem}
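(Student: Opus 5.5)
By Theorem~\ref{thm: Instance Reduction} and Lemma~\ref{lemma: Adversarial Exchange Lemma}, it suffices to consider instances made of $\mu n$ jobs of type $(0,\beta)$, $\lambda n$ jobs of type $(1,\beta)$ and $\sigma n$ jobs of type $(1,0)$, with $\mu+\lambda+\sigma=1$, scheduled in the adversarial order of the respective case. I would proceed exactly as in the proof of Theorem~\ref{thm: 1-SORT}. First I tabulate the pairwise delays $d(i,j)+d(j,i)\mid d^*(i,j)+d^*(j,i)$ for the six type pairs in each of the two cases $\beta\geq 1$ and $\beta\leq 1$. For example, for $\beta\geq1$ the entries $(0,\beta)$--$(0,\beta)$, $(0,\beta)$--$(1,\beta)$, $(0,\beta)$--$(1,0)$, $(1,\beta)$--$(1,\beta)$, $(1,\beta)$--$(1,0)$ and $(1,0)$--$(1,0)$ read $\beta\mid\beta$, $1+\beta\mid\beta$, $\beta\mid 1$, $2+\beta\mid 1+\beta$, $1+\beta\mid 1$ and $1\mid 1$. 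From Equations~\eqref{eq: Optimal cost} and~\eqref{eq: Algorithmic cost}, $ALG$ and $OPT$ are then $n^2$ times quadratic forms $A(\mu,\lambda,\sigma)$ and $B(\mu,\lambda,\sigma)$, plus lower-order terms. The lower-order terms are the diagonal corrections and the linear terms $\sum_j(t_j+p_j)$. Since $ALG\leq n^2A+O(n)$ and $OPT\geq n^2B+O(n)$ with $B>0$, the competitive ratio is $\sup A/B$ over the simplex. This supremum is approached as $n\to\infty$ with fixed fractions, and it is attained in the same way by the sub-family $\Psi_\beta$ once we show the supremum lies on the corresponding face.

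The core step is to show that $A/B$ attains its maximum on the face $\mu=0$ when $\beta\geq1$, and on the face $\sigma=0$ when $\beta\leq1$. I would fix $\lambda$ and the sum $s:=\mu+\sigma=1-\lambda$, and parametrize by $\mu\in[0,s]$. Both $A$ and $B$ then become at most quadratic in $\mu$. I would first check whether they are in fact affine in $\mu$. This happens if the $\mu^2$, $\sigma^2$ and $\mu\sigma$ coefficients combine so that $(\text{coef}_{\mu\mu}+\text{coef}_{\sigma\sigma}-\text{coef}_{\mu\sigma})$ vanishes. For $B$ it does not vanish in general, because $\min\{\beta,1\}$ appears only in the cross term, so $B$ is concave in $\mu$ while $A$ is affine or convex. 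If $A$ is affine and non-decreasing in the appropriate direction and $B$ is concave, Lemma~\ref{lemma: maximum at the bounds} applies directly, possibly after the substitution $\mu\mapsto s-\mu$ so that $f$ has non-negative slope. It shows the maximum over $\mu$ lies at an endpoint, $\mu=0$ or $\mu=s$. If $A$ turns out to be convex rather than affine, I would instead argue that $A-\rho B$ is convex in $\mu$ for the candidate value $\rho$, which again forces the maximum to an endpoint.

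It remains to rule out the wrong endpoint. For $\beta\geq1$ this means showing that the value at $\mu=s$ (types $(0,\beta),(1,\beta)$ only) is dominated by the supremum over $\Psi_\beta$; for $\beta\leq1$ the roles are symmetric. On the two-type faces, $A/B$ is a ratio of quadratics in a single variable $\lambda$. I would compute both one-variable suprema explicitly, as in Equation~\eqref{eq: ratio computation beta = 1}, by setting the derivative to zero and solving the resulting quadratic. I would then compare them as functions of $\beta$. Intuitively, for $\beta\geq1$ the $(1,0)$ jobs are the ones the algorithm delays behind long processing, and replacing $(0,\beta)$ jobs by them only helps the adversary; for $\beta\leq1$ this reverses. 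I expect this comparison of the two face-optima, together with verifying the sign conditions needed for the endpoint argument, to be the main obstacle.

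If the direct comparison is messy, a fallback is an exchange-style argument, as in Lemma~\ref{lemma: Adversarial Exchange Lemma}. It compares, pair by pair, replacing a job of the unwanted type by the retained type with a different type-$(1,\beta)$ fraction. Since all involved quantities are explicit rational functions of $\lambda$ and $\beta$, such an inequality can in the worst case be verified by clearing denominators.
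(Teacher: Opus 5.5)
Your core step fails. You claim that on the segment with $\lambda$ fixed and $\mu+\sigma=1-\lambda$, the ratio $A/B$ is maximized at $\mu=0$ or $\mu=1-\lambda$. This is false, because the curvatures are the reverse of what you assume. The $(0,\beta)$--$(1,0)$ pair has algorithmic delay $\max\{\beta,1\}$ but optimal delay $\min\{\beta,1\}$. So the $\mu^2$-coefficient along the segment is $\beta+1-2\max\{\beta,1\}=-|\beta-1|$ for $A$ and $\beta+1-2\min\{\beta,1\}=|\beta-1|$ for $B$. For instance, for $\beta>1$ one has $A=2\lambda+\beta+(1-\beta)\sigma^2$ and $B=\lambda^2+1+(\beta-1)(1-\sigma)^2$. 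Hence $A$ is concave and $B$ is convex, so $A-\rho B$ is concave and $A/B$ is quasi-concave on the segment. Neither Lemma~\ref{lemma: maximum at the bounds} nor your convexity fallback applies, and interior maxima really occur. For $\beta=2$, $\lambda=0.2$ you get $A/B\approx 1.18$ at $\sigma=0$ and $\approx 1.63$ at $\sigma=0.8$ (i.e.\ $\mu=0$), but $2.04/1.2=1.7$ at $\sigma=0.6$. For $\beta=\nicefrac{1}{2}$, $\lambda=0.2$, the value $\approx 1.80$ at $\sigma=0.25$ exceeds both endpoint values $\approx 1.67$ and $\approx 1.35$. So comparing the two face optima afterwards says nothing about interior points of these segments. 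Such points are dominated by $\Psi_\beta$ only via points with a \emph{different} $\lambda$.

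The missing idea is the paper's two-direction dichotomy. For $\beta>1$ and $\mu>0$, convert all $(0,\beta)$ jobs either into $(1,\beta)$ jobs ($\lambda\mapsto\lambda+\mu$, $\sigma$ fixed) or into $(1,0)$ jobs (your endpoint $\mu=0$). One computes $A_2=A+2\mu$, $B_2=B+\mu(2\lambda+\mu)$ and $A_3=A+\mu(1-\beta)(2-(2\lambda+\mu))$, $B_3=B+\mu(1-\beta)(2\lambda+\mu)$. Thus the first conversion does not decrease the ratio if and only if $A/B\le 2/(2\lambda+\mu)$. The second does not decrease it if and only if $A/B\ge 2/(2\lambda+\mu)-1$, after dividing by $\mu(1-\beta)<0$. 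One of the two always holds, and both targets lie in $\Psi_\beta$, so no comparison of face suprema is needed. For $\beta<1$, one converts the $(1,0)$ jobs into $(0,\beta)$ jobs (your endpoint $\sigma=0$) or into $(1,\beta)$ jobs. The conditions are $A/B\ge(2-\sigma)/(2\lambda+\sigma)$ and $A/B\le(2-(1-\beta)(2-\sigma))/(\beta(2\lambda+\sigma))$, and the second threshold is at least the first since $\sigma\ge0$. Your pair-by-pair exchange fallback does not supply this. The reduction must be amortized over the whole instance, and what your plan lacks is precisely the second direction, the one that leaves the fixed-$\lambda$ segment.
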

\begin{proof}[Proof Idea]
    The proof is similar to that for $1$-SORT in Theorem~\ref{thm: 1-SORT}, with more variables and a few more small arguments.
    Following Theorem~\ref{thm: Instance Reduction}, we consider an arbitrary instance~$I$ consisting of jobs of types $(0,\beta)$, $(1,\beta)$, $(1,0)$. We compute the pairwise delays using adversarial tie-breaks and obtain a similar structure to that in Table~\ref{tab: beta = 1}:

    \begin{table}[htb]
    \begin{minipage}[h]{0.49\textwidth}
        \centering
       \caption{Pairwise delays of $\beta$-SORT and the optimal schedule for the worst-case instances and $\beta>1$.}
       \label{tab: beta > 1}
        \begin{tabular}{c|r@{\;$\mid$\;}l r@{\;$\mid$\;}l r@{\;$\mid$\;}l}
        & \multicolumn{2}{c}{$(0,\beta)$} & \multicolumn{2}{c}{$(1,\beta)$} & \multicolumn{2}{c}{$\hspace{4.2ex}(1,0)$} \\
        \hline
        $(0,\beta)$ & $\beta$ & $\beta$ & $1+\beta$ & $\beta$ & $\beta$ & $1$ \\
        $(1,\beta)$ & \multicolumn{2}{c}{} & $2+\beta$ & $1+\beta$ & $1+\beta$ & $1$ \\
         $(1,0)$ & \multicolumn{2}{c}{} & \multicolumn{2}{c}{} & $1$ & $1$
        \end{tabular}
    \end{minipage}\hfill
    \begin{minipage}[h]{0.49\textwidth}
            \centering
    \caption{Pairwise delays of $\beta$-SORT and the optimal schedule for the worst-case instances and $\beta<1$.}
        \label{tab: beta < 1}
        \begin{tabular}{c|r@{\;$\mid$\;}l r@{\;$\mid$\;}l r@{\;$\mid$\;}l}
        & \multicolumn{2}{c}{$(0,\beta)$} & \multicolumn{2}{c}{$(1,\beta)$} & \multicolumn{2}{c}{$(1,0)$} \\
        \hline
        $(0,\beta)$ & $\beta$ & $\beta$ & $1+\beta$ & $\beta$ & $1$ & $\beta$ \\
        $(1,\beta)$ & \multicolumn{2}{c}{} & $2+\beta$ & $1+\beta$ & $2$ & $1$ \\
        $(1,0)$ & \multicolumn{2}{c}{} & \multicolumn{2}{c}{} & $1$ & $1$
        \end{tabular}
    \end{minipage}
    \end{table}
    
    We compute the algorithmic and optimal costs for each. In the case of $\beta>1$, we argue that we can convert all jobs of type~$(0,\beta)$ to either being jobs of type $(1,\beta)$ or jobs of type~$(1,0)$, reaching $\rho(\beta)$ as a limiting ratio, as well. Vice versa, for $\beta< 1$, we can convert jobs of type $(1,0)$ similarly. The detailed proof steps and calculations are relegated to Appendix~\ref{appendix: calculation beta reduction}.
\end{proof}

\begin{remark}\label{remark: perturbations worst-case}
    Note that in the instances in Theorem~\ref{thm: worst-case families}, adversarial tie-breaking is crucial. However, we can derive tight instances that do not rely on adversarial tie-breaking but yet enforce the specified order, following Lemma~\ref{lemma: tie breaking}. Another perturbation for some small $\varepsilon>0$ to achieve this is the following: for $\beta \leq 1$, use jobs of type $(0,\beta+\varepsilon), (1,\beta+\varepsilon)$, and for $\beta \geq 1$, use jobs of type $(1,\beta+\varepsilon), (1+2\epsilon,0)$ similar as in the class of instances in \cite{dogeas2024obligatory}.
\end{remark}

Having established Theorem~\ref{thm: worst-case families}, we can compute the ratio exactly for these highly structured instances for $\beta\neq 1$. This is the last step of the proposed instance-reduction method from Section~\ref{sec: Instance-Reduction}. For instances of this structure (up to scaling and perturbations), this calculation was already established in~\cite{dogeas2024obligatory} to obtain a lower bound on the competitive ratio~$\rho(\beta)$. We briefly repeat the results here to conclude the proof of Theorem~\ref{thm: Exact Competitive Ratio}.

\begin{proof}[Proof of Theorem~\ref{thm: Exact Competitive Ratio}]
    Consider an instance $I$ consisting of $\mu n, \lambda n,\sigma n$ jobs of types $(0,\beta)$, $(1,\beta)$, $(1,0)$, where $\mu,\lambda,\sigma\geq 0$ and $\mu+\lambda+\sigma=1$ as in Theorem~\ref{thm: Instance Reduction}. Using Theorem~\ref{thm: worst-case families}, the formulas for $\nicefrac{ALG}{OPT}$ from Appendix~\ref{appendix: calculation beta reduction} and substitution of variables, we need to solve a maximization problem in a single parameter $\lambda\in[0,1]$ depending on the value of~$\beta$ for the two cases $\beta>1$ and $\beta<1$. 
    
    For $\beta>1$, one obtains the fraction of 
    $\lambda^*=\nicefrac{(-2\beta + 1 + \sqrt{(2\beta-1)^2+4\beta^3})}{2\beta^2}$ and $\sigma^*=1-\lambda^*$, $\mu^*=0$, in the worst-case family, see also \cite{dogeas2024obligatory}, resulting in the competitive ratio as stated.
    
    For $\beta<1$, one obtains the fraction of 
    $\lambda^*=\nicefrac{(\sqrt{\beta^2+4\beta}-\beta)}{2}$ and $\mu^*=1-\lambda^*$, $\sigma^*=0$, in the worst-case family, see also \cite{dogeas2024obligatory}, resulting in the competitive ratio as stated.
\end{proof}

We end this section with a remark on the special case of uniform testing times, i.e., $t_j=1$ for each job $j$, to distinguish it from the general case.

\begin{remark}\label{remark: Uniform Testing Times}
    On the class of instances with uniform testing times, the competitive ratio of $\beta$-SORT does not change for $\beta\geq 1$ in comparison to the case of arbitrary testing times, following Theorem~\ref{thm: worst-case families}. For $\beta< 1$, following Theorem~\ref{thm: Instance Reduction}, one obtains worst-case instances with $(t_j,p_j)\in\lbrace(1,\beta),(1,0)\rbrace$ for all $j$. The competitive ratio is enhanced compared to Theorem~\ref{thm: Exact Competitive Ratio} on this branch. But the resulting expression is also minimized for $\beta=1$, still yielding the Golden Ratio as the competitive ratio of $1$-SORT on this class of instances. 

    Unlike in the general case, tie-breaking can make a difference for instances with uniform testing times $t_j=1$, and thereby improve competitive ratios. The algorithm ($\beta$-)\textsc{SIDLE}~\cite{dogeas2024obligatory} equals $\beta$-SORT with an additional tie-break in favor of processing operations whenever there is a processing operation of priority value $p_j = \beta t_j =\beta$. For $\beta> 1$ and the instance structure from Theorem~\ref{thm: worst-case families} with jobs of types $(1,\beta),(1,0)$, this tie-breaking prevents a job of type~$(1,\beta)$ from achieving the maximally possible pairwise delay with other jobs of type $(1,\beta)$ and of type $(1,0)$ at once, see Table~\ref{tab: beta > 1}. Note that this effect vanishes again for $\beta=1$, considering the perturbed instance $(1,1+\varepsilon),(1,0)$ for small $\varepsilon>0$.
    
    Using our instance-reduction method of Section~\ref{sec: Instance-Reduction}, tailored for instances with uniform testing times, or that of Dürr et al.~\cite{durr2020adversarial}, one can show as in \cite{dogeas2024obligatory} that we can restrict our analysis to the worst-case structure that uses $p_j\in\lbrace 0, \beta, \beta + \varepsilon\rbrace$ for some arbitrarily small $\varepsilon>0$ and an adversarial order of testing the jobs (in non-increasing order of processing times). Explicitly calculating the competitive ratio in this case, Dogeas et al.~\cite{dogeas2024obligatory} show that the parameter $\beta^*\approx 1.355$ yields the minimal competitive ratio of $\rho(\beta^*)\approx 1.585$ for $\beta$-\textsc{SIDLE}. This also shows that it might be beneficial to consider tie-breaking rules and a value $\beta\neq 1$ for specially structured instances, in contrast to the setting with arbitrary testing time, see Lemma~\ref{lemma: tie breaking}.

    Hence, we conclude that the competitive ratio obtained for $\beta$-SORT crucially depends on the instance having distinct testing times that an adversary can structurally exploit (see Remark~\ref{remark: perturbations worst-case}) or on adversarial tie-breaking respectively.
\end{remark}

\section{Randomized Algorithms for Uniform Testing Times }\label{sec: Randomized Algorithms}

In this section, we consider a randomized variant of $\beta$-\textsc{SORT} on instances with uniform test times, in which we uniformly randomize the test order and then run $\beta$-\textsc{SORT} for a fixed value of $\beta>0$. As a tie-breaker, we may also prioritize processing over testing for all pairs of all jobs as for $\beta$-SIDLE \cite{dogeas2024obligatory}. This randomization step accounts for the fact that $\beta$-\textsc{SORT} tests jobs in the worst-case instances in exactly the opposite order to the optimal algorithm. This algorithm is a variant of the \textsc{RANDOM} algorithm for optional testing of Dürr et al.~\cite{durr2020adversarial} in the obligatory testing setting. 

In Section~\ref{sec: Exact Competitive Ratio Random}, we will compute the exact competitive ratio for this rule and show that it is minimized for $\beta^*\approx 1.947$ with a competitive ratio of $\rho(\beta^*)\approx 1.359$. In Section~\ref{sec: Lower Bounds for Randomized Algorithms}, we use a connection to the literature on stochastic scheduling with testing \cite{levi2024scheduling} via Yao's principle~\cite{yao1977probabilistic} that has not been used in the literature on adversarial models for scheduling with testing up to now giving us a lower bound of $\nicefrac{4}{3}$ for any randomized algorithm. This bound has also been derived differently in \cite{rozhon2026test} with a matching upper bound for an intricate sampling algorithm. We give a novel structural interpretation here and further show that a simple index rule, paired with randomization, is almost best possible for uniform testing times, as well. This section also shows that a significant part of the competitive ratio of $\beta$-SORT is due to the adversarial testing order.

\subsection{Exact Competitive Ratio}\label{sec: Exact Competitive Ratio Random}
The expected pairwise delay between jobs $i\neq j$ with $p_i\leq p_j$ by $\beta$-\textsc{SIDLE} for uniform testing times, i.e., $t_j=1$ for all jobs $j$, for uniformly permuted tests is the following:
\begin{equation}\label{eq: pairwise random ordering}
    \mathbb{E}\left[d(i,j) + d(j,i)\right]
    = 1 + \left\lbrace\begin{array}{cl}
     \frac{p_i + p_j}{2}& , 0\leq p_i,p_j\leq \beta, \\
     \frac{1}{2} + p_i  & , 0\leq p_i \leq \beta < p_j, \\
     1 + p_i & , \beta < p_i,p_j.
\end{array}\right.
\end{equation}
For the pairwise delay in the optimal solution, we have, as before, $d^*(i,j) + d^*(j,i) = 1+ \min\lbrace p_i,p_j\rbrace = 1+ p_i$. As in Section~\ref{sec: Deterministic Algorithms}, we start by computing pairwise bounds.

\begin{lemma}\label{lemma: Expected pairwise bounds}
    For uniform testing times, $\beta$-\textsc{SIDLE} on a uniformly permuted test order achieves a competitive ratio of at most
   $\max\left\lbrace 1+ \nicefrac{\beta}{2},1+\nicefrac{1}{1+\beta}\right\rbrace$. This expression is minimized for $\beta=1$ with $\rho(1)\leq \nicefrac{3}{2}$.
\end{lemma}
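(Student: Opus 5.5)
We bound the expected pairwise delay locally against the optimal pairwise delay, exactly as in the proof of Lemma~\ref{lemma: Pairwise Bounds}. Assume $t_j=1$ for all $j$ and fix $i\neq j$ with $p_i\leq p_j$. Then $d^*(i,j)+d^*(j,i)=1+p_i$, and we use the expected delay from Equation~\eqref{eq: pairwise random ordering}. By linearity of expectation and Equations~\eqref{eq: Optimal cost} and~\eqref{eq: Algorithmic cost}, it suffices to show
\[
\mathbb{E}[d(i,j)+d(j,i)]\leq \theta\,(1+p_i), \qquad \theta:=\max\left\lbrace 1+\tfrac{\beta}{2},\,1+\tfrac{1}{1+\beta}\right\rbrace ,
\]
for every such pair. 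The diagonal terms $t_j+p_j$ appear identically in both costs and only help, since $\theta\geq1$.

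We treat the three cases of Equation~\eqref{eq: pairwise random ordering} separately.
\begin{enumerate}
    \item Both $p_i,p_j\leq\beta$. Since $p_j\leq \beta$, the ratio $\frac{1+(p_i+p_j)/2}{1+p_i}$ is largest at $p_j=\beta$, where it equals $\frac{1+(p_i+\beta)/2}{1+p_i}$. This is a ratio of affine functions of $p_i$, so it is monotone on $[0,\beta]$ and its maximum is at an endpoint. At $p_i=0$ the value is $1+\beta/2$. At $p_i=\beta$ the value is $1$. Hence this case gives $1+\beta/2$.
    \item $p_i\leq\beta<p_j$. The ratio is $\frac{3/2+p_i}{1+p_i}=1+\frac{1/2}{1+p_i}$, which is at most $3/2$ (attained at $p_i=0$). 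Since $3/2\leq 1+\beta/2$ when $\beta\geq1$ and $3/2\leq 1+\frac{1}{1+\beta}$ when $\beta\leq1$, this case never exceeds $\theta$.
    \item Both $p_i,p_j>\beta$. The ratio is $\frac{2+p_i}{1+p_i}=1+\frac{1}{1+p_i}$, which decreases in $p_i$. Its supremum is approached as $p_i\downarrow\beta$, giving $1+\frac{1}{1+\beta}$.
\end{enumerate}
Taking the maximum of the three cases yields the stated bound.

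For the minimization over $\beta$, note that $1+\beta/2$ is increasing and $1+\frac1{1+\beta}$ is decreasing in $\beta$. The maximum of the two is therefore minimized where they intersect, i.e., where $\beta(1+\beta)=2$, which gives $\beta=1$ and the value $3/2$.

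The main obstacle is case 2. There one must check that the pair of jobs straddling the threshold $\beta$ never dominates; this requires comparing $3/2$ to the other two branches separately for $\beta\geq1$ and $\beta\leq 1$. A second point to verify is that the boundary $p_j=\beta$ belongs to the lower branch because SIDLE breaks ties in favour of processing, so that Equation~\eqref{eq: pairwise random ordering} is applied consistently.
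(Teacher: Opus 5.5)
Your proof is correct and matches the paper's argument: a case-by-case local bound on the ratio of the expected pairwise delay in Equation~\eqref{eq: pairwise random ordering} to $1+p_i$, with the middle case's value $3/2$ dominated by the other two branches. Your monotonicity argument for the minimization, with the crossing point $\beta(1+\beta)=2$, makes explicit a step the paper only states.
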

\begin{proof}
    The bound on the ratio is achieved by a case distinction on Equation~\eqref{eq: pairwise random ordering}. We obtain a factor of at most $1+\nicefrac{(p_j-p_i)}{2(1+p_i)}\leq 1+ \nicefrac{\beta}{2}$ in the first case, tight for $p_i=0, p_j=\beta$; of $\nicefrac{(\nicefrac{3}{2}+p_i)}{(1+p_i)}\leq\nicefrac{3}{2}$ in the second case, tight for $p_i=0$; and of $\nicefrac{(2+p_i)}{(1+p_i)}\leq 1+\nicefrac{1}{(1+\beta)}$ in the last case, tight for $p_i=\beta$. The bound of $\nicefrac{3}{2}$ in the second case is dominated by the maximum of the bounds in the other two cases.
\end{proof}

As in Section~\ref{sec: Deterministic Algorithms}, we can leverage our instance-reduction scheme from Section~\ref{sec: Instance-Reduction} or that of Dürr et al.~\cite{durr2020adversarial} to consider all pairs in a more amortized fashion and characterize the exact competitive ratio, similar to that of the algorithms RANDOM in \cite{durr2020adversarial} and $\beta$-SIDLE in \cite{dogeas2024obligatory}.

\begin{theorem}\label{thm: Optimized Bounds}
    For uniform testing times, $\beta$-\textsc{SIDLE} on a uniformly permuted test order achieves a competitive ratio of exactly
    \begin{equation}\label{eq: randomized SIDLE exact competitive ratio}
       \rho(\beta) =\left\lbrace\begin{array}{ll}
            1+\frac{1}{1+\beta},&  0 < \beta\leq 1, \\
                \max\left\lbrace1+\frac{1}{2\sqrt{\beta}},\frac{1+\sqrt{1+\beta}}{2} \right\rbrace,& \beta \geq 1.
        \end{array}\right.
    \end{equation}
    This expression is minimized for $\beta^*\approx 1.947$ being the solution of $\beta^2=2\sqrt{\beta}+1$ and attaining a value of $\rho(\beta^*)\approx 1.359$.
\end{theorem}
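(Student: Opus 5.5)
The plan follows the three-step instance-reduction method of Section~\ref{sec: Instance-Reduction}, now applied to expected costs; Remark~\ref{remark: technical aspects lemma 2} notes that Lemma~\ref{lemma: shift to the bounds} extends to expected pairwise delays. With $t_j=1$ throughout, only processing times vary. By Equation~\eqref{eq: pairwise random ordering}, the expected pairwise delay of each pair is affine and non-decreasing in the processing times, as long as no $p_j$ crosses the threshold $\beta$.

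\textbf{Step 1 (reduction).} We group the processing times into those at most $\beta$ and those strictly above $\beta$. Within $[0,\beta]$, we take all jobs sharing a distinct value $x_i$ and shift them jointly over $[x_{i-1},x_{i+1}]$. Lemma~\ref{lemma: shift to the bounds} then lets us merge values iteratively until only $p\in\{0,\beta\}$ remain. Above $\beta$, the delay $1+\min\{p_i,p_j\}$ equals the optimal delay plus one, so the same shifting argument pushes these values to $\beta+\varepsilon$ (the lower endpoint, taken as a limit) or lets them grow unboundedly. Jobs that are very long contribute ratio $1$ against everything else and can be discarded in the limit. We are left with three types, $p\in\{0,\beta,\beta+\varepsilon\}$, in proportions $\sigma,\mu,\lambda$ with $\sigma+\mu+\lambda=1$.

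\textbf{Step 2 (quadratic coefficients).} From Equation~\eqref{eq: pairwise random ordering}, the expected pairwise delays of the algorithm are:
\begin{itemize}
\item $1$ for a pair of type $0$, $1+\beta$ for a pair of type $\beta$, and $1+\beta/2$ for a type-$0$/type-$\beta$ pair;
\item $3/2$ for a type-$0$/type-$(\beta+\varepsilon)$ pair, $3/2+\beta$ for a type-$\beta$/type-$(\beta+\varepsilon)$ pair, and $2+\beta$ for a pair of type $\beta+\varepsilon$.
\end{itemize}
The corresponding optimal delays are $1+\min\{p_i,p_j\}$. As in Theorem~\ref{thm: 1-SORT}, the ratio tends to the ratio of the $n^2$ coefficients, so we obtain
\[
\frac{ALG}{OPT}\to \frac{Q_A(\sigma,\mu,\lambda)}{Q_O(\sigma,\mu,\lambda)},
\]
a ratio of two quadratic forms. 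We would maximize this over the simplex by checking the faces. On the face $\lambda=0$ (types $0$ and $\beta$ only), optimizing the mix should give $\frac{1+\sqrt{1+\beta}}{2}$. The faces involving the $(\beta+\varepsilon)$ type should give $1+\frac{1}{1+\beta}$ when $\beta\le1$, realized by the pure $(\beta+\varepsilon)$ pairwise bound of Lemma~\ref{lemma: Expected pairwise bounds}, and $1+\frac{1}{2\sqrt\beta}$ when $\beta\ge1$, realized by mixing types $0$ and $\beta+\varepsilon$.

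\textbf{Step 3 (interior and minimization).} The main obstacle is showing that no interior point of the simplex beats the best face. The first approach would be a Lagrange / ratio-of-quadratics argument: the maximum ratio $\rho$ is the largest value for which $Q_A-\rho Q_O$ is copositive-boundary-tight, so we need to check that $Q_A-\rho Q_O\le0$ on the simplex for $\rho$ equal to the claimed maximum. Because $Q_A-\rho Q_O$ is a quadratic form, it suffices to verify nonpositivity on the vertices and edges and then check the sign of the relevant $2\times2$ and $3\times3$ minors. Any lower-bound direction follows from the explicit face-optimal families with $\varepsilon\to0$. Finally, for $\beta\ge1$, the first term $1+\frac{1}{2\sqrt\beta}$ is decreasing and the second $\frac{1+\sqrt{1+\beta}}{2}$ is increasing, so the optimum lies where they are equal, which simplifies to $\beta^2=2\sqrt\beta+1$, giving $\beta^*\approx1.947$ and $\rho\approx1.359$. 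For $\beta\le1$, the ratio is at least $3/2$, which is worse.
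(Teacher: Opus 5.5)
Your end result of Step~1 (the three types $p\in\{0,\beta,\beta+\varepsilon\}$), your list of expected pairwise delays, and your three face values are all correct and agree with the paper.

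\textbf{The gap is Step~3.} The upper bound hinges on showing that no interior point of the simplex beats the faces, and you only announce a copositivity check without performing it. A correct $3\times 3$ copositivity criterion is not a sign-of-minors test, and it would be messy to run with $\rho$ involving $\sqrt{\beta}$ and $\sqrt{1+\beta}$. The missing observation makes the step a one-liner:
\begin{itemize}
\item Let $a,b,c$ be the fractions of types $0,\beta,\beta+\varepsilon$. Up to $\varepsilon$-terms, which only increase $OPT$, the optimum cannot distinguish types $\beta$ and $\beta+\varepsilon$. Hence $Q_O=1+\beta(1-a)^2$ depends on $a$ alone.
\item The excess is $Q_A-Q_O=\beta ab+(a+b)c+c^2=\beta ab+c$.
\item For fixed $a$ and $b+c=1-a$, the ratio $1+\frac{(1-a)+b(\beta a-1)}{1+\beta(1-a)^2}$ is affine in $b$.
\item So the ratio is maximized at $b=0$ or $c=0$, i.e., on the faces $\{0,\beta+\varepsilon\}$ or $\{0,\beta\}$.
\end{itemize}
This is precisely the paper's argument. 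In its parametrization $a=\mu\lambda$, it writes each interior value as a convex combination of the values at $(\mu,\lambda)=(\mu\lambda,1)$ and $(1,\mu\lambda)$, which share the denominator $1+\beta(\mu\lambda-1)^2$. As a by-product, the face $\{\beta,\beta+\varepsilon\}$ needs no separate treatment.

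\textbf{Two smaller points.}
\begin{itemize}
\item For the upper bound you need $ALG/OPT\le Q_A/Q_O$ for every finite $n$, not only convergence as $n\to\infty$. This holds because the linear-in-$n$ term of the excess, $-\frac{n}{2}c$, is non-positive, while the linear term of $OPT$ is positive.
\item In Step~1 your justification for $p_j>\beta$ is off. A long job's delays with shorter jobs are $\frac{3}{2}+p_i$ versus $1+p_i$, or $2+p_i$ versus $1+p_i$. These do not depend on its length and do not have ratio $1$, so such jobs cannot simply be ``discarded''. The direct argument, used in the paper, is that $\eta_{ij}$ does not depend on any $p_j>\beta$, while $OPT$ is non-decreasing in it. Hence all such $p_j$ can be lowered to $\beta+\varepsilon$ at once, with no upper endpoint to consider.
\end{itemize}
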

\begin{proof}
Let $p_1\leq p_2\leq \ldots \leq p_n$. Based on the Equation~\eqref{eq: pairwise random ordering} for the pairwise delay, we can first apply the instance-reduction method and then explicitly optimize over the parameter set to obtain a closed-form solution. Since $\mathbb{E}\left[d(i,j) + d(j,i)\right]$ is linear when shifting processing times in $[0,\beta]$, we can apply Lemma~\ref{lemma: shift to the bounds} to restrict all processing times in this interval to be either $0$ or $\beta$. For the competitive ratio, we obtain:
\begin{equation}\label{eq: Surplus Rewrite}
    \frac{ALG}{OPT} = \frac{OPT + \sum_{i<j} \eta_{ij} }{OPT} = 1 + \frac{\sum_{i<j} \eta_{ij} }{\sum_j (n+1-j)(1+p_j)} \ ,
\end{equation}
with short-hand notation $\eta_{ij}$ equal to $\nicefrac{(p_j-p_i)}{2}$ for $0\leq p_i,p_j\leq \beta$, $\nicefrac{1}{2}$ for $0\leq p_i \leq \beta < p_j$, and $1$ for $\beta < p_i,p_j$.
All processing times with $p_j>\beta$ can be shifted to $\beta+\varepsilon$ for small $\varepsilon>0$, since they only appear in the denominator and with a positive sign.
Hence, we can restrict ourselves to instances with three different processing times $p_j\in\lbrace0,\beta,\beta+\varepsilon\rbrace$. Similarly as in \cite{dogeas2024obligatory} and Section~\ref{sec: Deterministic Algorithms}, for arbitrary fractions $\mu,\lambda \in [0,1]$, let there be $\mu\lambda n$ jobs of length $0$, $\mu(1-\lambda) n$ jobs of length $\beta$, and $
(1-\mu) n$ jobs of length $\beta+\varepsilon$. For this family, we obtain:
\begin{equation}
    \begin{aligned}
    \sum_{i<j} \eta_{ij} &= \frac{\beta}{2}\cdot \mu^2\lambda(1-\lambda)n^2 + \frac{1}{2}\cdot(1-\mu)\mu n^2 +  1\cdot\frac{1}{2}(1-\mu)n((1-\mu)n -1)\\ &= \frac{n^2}{2}\left( 1-\mu +\beta\mu^2\lambda(1-\lambda)\right) - \frac{n}{2}(1-\mu) \ .
    \end{aligned}
\end{equation}

For the optimal schedule for this family, we obtain, see \cite{dogeas2024obligatory}:

\begin{equation}
    OPT=\frac{n^2}{2}\cdot (1+\beta(\mu\lambda -1)^2) + \frac{n}{2}\cdot \left(\mu + \beta\mu(1-\lambda) + (1+\beta+\varepsilon)(1-\mu)\right) \ .
\end{equation}

We can bound
\begin{equation}\label{eq: Bound on the Ratio Randomized}
    \frac{ALG}{OPT} \leq 1 + \max_{0\leq\mu,\lambda\leq1}\frac{1-\mu +\beta\mu^2\lambda(1-\lambda)}{1+\beta(\mu\lambda -1)^2}\ ,
\end{equation}
and obtain a maximization problem in two variables. Solving this for each $\beta$ and minimizing the expression yields $\beta^*$ as stated. The detailed algebraic steps are relegated to Appendix~\ref{appendix: parameter optimization}.
\end{proof}

We conclude this section with a comparison of the exact competitive ratios of $\beta$-\textsc{SORT} for arbitrary testing times, $\beta$-\textsc{SIDLE} for uniform testing times, and $\beta$-\textsc{SIDLE} for uniformly ordered tests.
 
\begin{remark}\label{remark: comparison of bounds}
\begin{figure}[htb]
    \centering
    \includegraphics[width=0.66\linewidth]{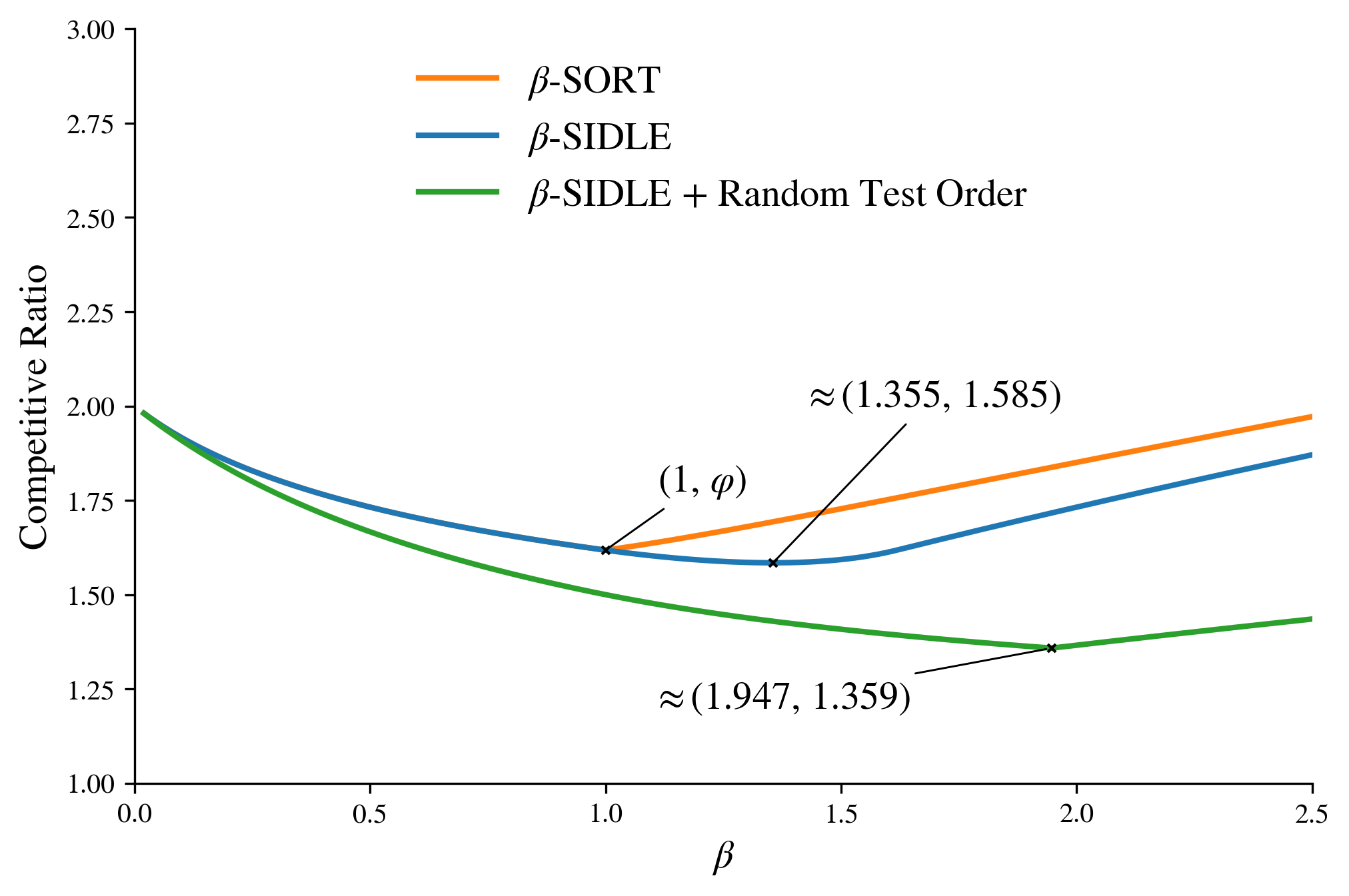}
    \caption{Comparison of exact competitive ratios for uniform testing times.}
    \label{fig: Competitive Ratios Randomized}
\end{figure}

The competitive ratios of the three proposed scheduling rules are depicted in Figure~\ref{fig: Competitive Ratios Randomized}. 
The worst-case instance structures for $\beta$-SIDLE with a random test order are achieved by instances with $p_j=\beta+\varepsilon$ for $\beta<1$, instances with $p_j\in \lbrace 0, \beta + \varepsilon\rbrace$ for $1\leq \beta \leq \beta^*\approx 1.947$, and instances with $p_j\in \lbrace 0, \beta\rbrace$ for $\beta>\beta^*$, see Appendix~\ref{appendix: parameter optimization}. For $\beta$-SIDLE, the worst-case structure is similar, but slightly different, e.g., there is a mixed-case $p_j\in\lbrace 0,\beta,\beta+\varepsilon\rbrace$ for some values of $\beta$.
For $\beta$-SORT, the branch for $\beta\in[1,\infty)$ remains the same as in Section~\ref{sec: Deterministic Algorithms}, while it collapses to the one of $\beta$-SIDLE on the branch for $\beta<1$, see Remark~\ref{remark: Uniform Testing Times}. This figure also shows that the optimal value of $\beta$ increases as the adversary becomes less powerful across these three types of algorithms, and one balances the trade-off more in favor of exploitation.
\end{remark}

In the following Section~\ref{sec: Lower Bounds for Randomized Algorithms}, we will see that there is a lower bound of $\nicefrac{4}{3}$. This means that a simple randomization step for instances with uniform testing times significantly reduces the competitive ratio.

\subsection{A Lower Bound for Randomized Algorithms}\label{sec: Lower Bounds for Randomized Algorithms}
In the stochastic counterpart of the scheduling problem with testing studied in this paper, we are given $n$ jobs to be scheduled on a single machine, with testing times $t_j$ for each job and random non-negative processing times $P_j\sim\mathcal{D} _j$ independently drawn from a distribution~$\mathcal{D}_j$ with finite support and the objective is to minimize $\E\left[\sum C_j\right]$, see e.g., \cite{levi2019scheduling,levi2024scheduling,sun2018patient}. Following Levi et al.~\cite{levi2024scheduling}, an index rule is optimal for this stochastic scheduling problem. This index rule bears similarities with $\beta$-SORT and uses indices $p_j$ for processing operations and $x_j$ being the unique solution of $t_j=\E\left[\max\lbrace x_j - P_j,0\rbrace\right]$ for the testing operation for job $j$.

We explicitly use this via the connection between randomized algorithms for adversarial models and deterministic algorithms for stochastic models, established by Yao's principle~\cite{yao1977probabilistic}. This provides an alternative, direct explanation of the lower-bound family for uniform testing times, as also given in \cite{rozhon2026test}.

\begin{lemma}
    Any randomized algorithm for scheduling with obligatory testing on a single machine to minimize $\E\left[\sum C_j\right]$ has a competitive ratio of at least $\nicefrac{4}{3}$.
\end{lemma}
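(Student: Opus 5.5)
The plan is to apply Yao's principle and turn the claim into a statement about the stochastic model of Levi et al.~\cite{levi2024scheduling}. Concretely, for any randomized algorithm $A$ and any input distribution $\mathcal{D}$ over instances,
\[
\sup_{I}\frac{\E[A(I)]}{OPT(I)} \;\geq\; \frac{\E_{I\sim\mathcal{D}}\E[A(I)]}{\E_{I\sim\mathcal{D}}[OPT(I)]} \;\geq\; \frac{\min_{\text{det. } B}\E_{I\sim\mathcal{D}}[B(I)]}{\E_{I\sim\mathcal{D}}[OPT(I)]} .
\]
The first inequality holds because a weighted average of ratios with positive denominators is dominated by the largest ratio. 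The numerator on the right is exactly the optimal expected cost in the stochastic counterpart, where an oblivious online policy faces $\mathcal{D}$. So it suffices to exhibit one distribution for which the optimal stochastic policy costs (asymptotically) $\nicefrac{4}{3}$ times the expected clairvoyant optimum. I will use uniform testing times, so the bound also applies to that subclass.

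\textbf{Instance.} Take $n$ jobs with $t_j=1$ and $P_j$ i.i.d., equal to $0$ or $2$ with probability $\nicefrac12$ each. Solve $1=\E[\max\{x-P_j,0\}]=\nicefrac{x}{2}$ (for $x\leq 2$) to get the test index $x_j=2$. Revealed processing operations get index $0$ or $2$.

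\textbf{Cost of the optimal stochastic policy.} By the optimality of the index rule~\cite{levi2024scheduling}, an optimal policy runs processing operations of length $0$ immediately. Tests and length-$2$ operations tie at index $2$, so we may break ties by testing everything first. (The ties are exactly why this distribution is extremal, but any tie-break attains the same optimal expected cost.) The $k$-th test ends at time $k$. Jobs with $P_j=0$ finish then, contributing about $\nicefrac{n^2}{4}$. The roughly $\nicefrac{n}{2}$ jobs with $P_j=2$ are then processed back to back after time $n$, contributing about $\nicefrac{n}{2}\cdot n+2\cdot\nicefrac{(n/2)^2}{2}=\nicefrac{3n^2}{4}$. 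In total, $\E[ALG]\approx n^2$.

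\textbf{Clairvoyant optimum.} By Equation~\eqref{eq: Optimal cost}, and using concentration of the number of zero jobs (or simply linearity of expectation over pairwise minima), $OPT$ first completes the zero jobs (length $1$) and then the long jobs (length $3$). This gives
\[
\E[OPT]\approx \frac{n^2}{8}+\frac{n^2}{4}+3\cdot\frac{n^2}{8}=\frac{3n^2}{4} .
\]
The ratio therefore tends to $\nicefrac{4}{3}$ as $n\to\infty$. A quick sanity check in pairwise terms: each pair incurs an expected surplus of $\nicefrac12$ for exactly the mixed and long--long pairs, which matches $\E[ALG]-\E[OPT]\approx \nicefrac{n^2}{4}$.

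\textbf{Main obstacle.} The delicate step is rigorously identifying the optimal stochastic cost. We need that no adaptive deterministic policy beats $n^2(1+o(1))$, which is precisely what Levi et al.'s optimality theorem supplies. We also need that the tie-break choice does not matter: all index-consistent orders should give the same expected cost, and I would verify this by a pairwise exchange computation. Finally, the $O(n)$ lower-order terms must be tracked to pass to the limit. For that I would write exact expectations via the pairwise-delay formulas, where lower-order terms only affect coefficients of $n$.
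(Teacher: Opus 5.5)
Your proposal is correct and follows the paper's proof essentially step for step: the same Yao's-principle reduction to the stochastic model, the same instance with $t_j=1$ and $P_j\in\{0,2\}$ and test index $x_j=2$, and the same appeal to Levi et al.'s optimality of the index rule. The only difference is cosmetic: you break ties by testing everything first, whereas the paper uses the non-preemptive order. Both tie-breaks give an expected pairwise delay of $2$, so $\E[ALG]=n^2+n$ against $\E[OPT]=2n+\frac{3}{2}\cdot\frac{n(n-1)}{2}$, exactly as in the paper.
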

\begin{proof}
    Consider the instance as in \cite{rozhon2026test} consisting of $n$ jobs with $t_j=1$ and independent $P_j$ drawn from a distribution attaining $0$ and $2$ with probability $\nicefrac{1}{2}$ each. The priority value for testing operations is $x_j=2$ as $\E\left[\max\lbrace2 - P_j,0\rbrace\right]=1$ for each job~$j$. Hence, scheduling the jobs without preemption is an optimal policy, and we compute its objective value as:
    \begin{equation}
        \E[\sum C_j]= \sum_j\E[ 1 + p_j] + \sum_{i< j} \E\left[d(i,j) + d(j,i)\right] = 2n +\frac{2n(n-1)}{2}\ ,
    \end{equation}
    and the objective value of the optimal clairvoyant policy as
    \begin{equation}
        \E[OPT]= \sum_j\E[1 + p_j] + \sum_{i< j} \E\left[1+ \min\lbrace P_i,P_j\rbrace\right] = 2n + \frac{3}{2}\cdot\frac{n(n-1)}{2} \ .
    \end{equation}
    For $n\rightarrow\infty$, we obtain a ratio approaching $\nicefrac{4}{3}$.
\end{proof}

\section{Conclusion}\label{sec: Conclusion}
In this paper, we developed an iterative instance-reduction method that is competitive-ratio-preserving, yielding highly structured instances on which the competitive ratio can be computed directly. We applied it to various index rules for scheduling with obligatory testing and showed how to precisely determine competitive ratios. 
With this, we showed that the exact competitive ratio of $1$-SORT is the Golden Ratio, improving upon the bound of $1.861$ in \cite{dogeas2024obligatory} and yielding the new best-known bound for scheduling with obligatory testing with arbitrary testing times on a single machine to minimize total completion time. 

We believe our instance-reduction technique can also be leveraged to compute exact competitive ratios for other (index) rules that rely on pairwise structures in online optimization problems in scheduling and beyond. A natural next step arising from this research is to develop randomized algorithms for obligatory testing with non-uniform test times.

\paragraph*{Declaration on AI Usage}
The ideas and proofs in this paper were developed by the author, and the paper was written by the author. The model Claude Opus~$5$ by Anthropic was used to assist with simulations, validation of calculations, presentation discussions, and parts of the parameter optimization in Appendix~\ref{appendix: parameter optimization}. 

\bibliography{bibliography}

\appendix
\section{Exact Competitive Ratio of $\beta$-SORT for $\beta\neq 1$}\label{appendix: calculation beta reduction}

\begin{proof}[Proof of Theorem~\ref{thm: worst-case families} for $\beta \neq1$]
    Let $I$ be an instance with job types $(0,\beta),(1,\beta),(1,0)$ as in Theorem~\ref{thm: Instance Reduction}. Suppose we have an instance $I$ consisting of $\mu n, \lambda n,\sigma n$ jobs of type $(0,\beta),(1,\beta),(1,0)$, where $\mu,\lambda,\sigma\geq 0$ and $\mu+\lambda+\sigma=1$. 

    Consider the case of $\beta>1$ first. Using Equation~\eqref{eq: Algorithmic cost} and Table~\ref{tab: beta > 1} to compute the pairwise delays; and further $\mu=1-\lambda-\sigma$; we obtain:
    \begin{equation}
        \begin{aligned}
         ALG &=  \frac{n^2}{2} \left( \beta\mu^2 + (2+\beta)\lambda^2 + \sigma^2+ 2\beta\mu\sigma + 2(1+\beta)\lambda(\mu+\sigma) \right) + \frac{n}{2}\left(\beta\mu + \beta\lambda + \sigma  \right) \\
         &= \frac{n^2}{2} \left( 2\lambda + \sigma^2(1-\beta) + \beta \right) +\frac{n}{2} \left(\sigma + \beta(1-\sigma) \right) \ ,
        \end{aligned}
    \end{equation}
    
    and similarly, for the optimal schedule, using Equation~\eqref{eq: Optimal cost}, Table~\ref{tab: beta > 1} and $\mu=1-\lambda-\sigma$:
    \begin{equation}
        \begin{aligned}
         OPT &= \frac{n^2}{2} \left( \beta\mu^2 + (1+\beta)\lambda^2 + \sigma^2+ 2\beta\mu\lambda + 2\sigma(\mu + \lambda) \right) +  \frac{n}{2} \left(\beta\mu + (1+\beta)\lambda + \sigma  \right) \\
         &= \frac{n^2}{2} \left( \lambda^2 + 1 + (\sigma-1)^2(\beta -1)\right) +  \frac{n}{2} \left(\lambda + \sigma + \beta(1-\sigma) \right) \ .
        \end{aligned}
    \end{equation}

    We can rewrite 
    \begin{equation}\label{eq: ratio computation beta > 1}
        \frac{ALG(I)}{OPT(I)} \leq \frac{ 2\lambda + \sigma^2(1-\beta) + \beta }{\lambda^2 + 1 + (\sigma-1)^2(\beta -1)} =: \frac{A}{B} \ .
    \end{equation}
    In addition, the ratio approaches the latter ratio as $n\rightarrow\infty$, allowing us to focus on the limiting case with $\mu, \lambda,\sigma$ fixed.
    
   If $\mu=0$, then we are done. Otherwise, we consider two cases: Either consider the type of instance $I_2$, parametrized by $\mu_2=0, \lambda_2 = \lambda +\mu, \sigma_2 = \sigma$, or instance $I_3$, parametrized by $\mu_3=0, \lambda_3 = \lambda, \sigma_3 = \sigma +\mu$ where all jobs of type $(0,\beta)$ are converted to jobs of type $(1,\beta)$ in~$I_2$ or to jobs of type $(1,0)$ in $I_3$.
   Using Equation~\eqref{eq: ratio computation beta > 1}, we have that 
    \begin{equation}\label{eq: computation case 1 beta > 1}
        \frac{A}{B} \leq \frac{A_2}{B_2} = \frac{A+ 2\mu}{B + \mu(2\lambda + \mu)} \Leftrightarrow \frac{A}{B} \leq \frac{2}{2\lambda + \mu}
    \end{equation}
    for instance $I_2$ and analogously for instance $I_3$ using $\sigma=1-\mu-\lambda$ and $\beta>1$:
    \begin{equation}\label{eq: computation case 2 beta > 1}
        \frac{A}{B} \leq \frac{A_3}{B_3} = \frac{A+ \mu(2\sigma+\mu)(1-\beta)}{B - \mu(1-\beta)(\mu + 2\sigma-2)} = \frac{A+ \mu(2- (2\lambda + \mu))(1-\beta)}{B + \mu(1-\beta)(2\lambda + \mu)} \Leftrightarrow \frac{A}{B}  \geq \frac{2}{2\lambda + \mu} -1 \ .
    \end{equation}
    Since at least one of the two conditions on $\nicefrac{A}{B}$ holds, we can find a family of instances that achieves the same competitive ratio or worse in the limit.

    For $\beta < 1$, we use the same proof strategy with slightly different terms to show that there exists an instance family such that $\sigma_2=0$ that achieves the same competitive ratio or worse.

    Using Equation~\eqref{eq: Algorithmic cost} and Table~\ref{tab: beta < 1} to compute the pairwise delays; and further $\mu=1-\lambda-\sigma$; we obtain:
    \begin{equation}
        \begin{aligned}
         ALG &=  \frac{n^2}{2} \left( \beta\mu^2 + (2+\beta)\lambda^2 + \sigma^2+ 2\sigma(2\lambda + \mu) + 2(1+\beta)\lambda\mu \right) + \frac{n}{2}\left(\beta\mu + \beta\lambda + \sigma  \right) \\
         &= \frac{n^2}{2} \left( 2\lambda +1 + (\sigma-1)^2(\beta -1) \right) +\frac{n}{2} \left(\sigma + \beta(1-\sigma) \right) \ .
        \end{aligned}
    \end{equation}
    
    For the optimal schedule, using Equation~\eqref{eq: Optimal cost}, Table~\ref{tab: beta < 1} and $\mu=1-\lambda-\sigma$, it follows:
    \begin{equation}
        \begin{aligned}
         OPT &= \frac{n^2}{2} \left(  \beta\mu^2 + (1+\beta)\lambda^2 + \sigma^2+ 2\lambda\sigma + 2\beta\mu(\sigma + \lambda) \right) +  \frac{n}{2} \left(\beta\mu + (1+\beta)\lambda + \sigma  \right) \\
         &= \frac{n^2}{2} \left( (1-\beta)(\sigma + \lambda)^2 + \beta(1+\lambda^2)\right) +  \frac{n}{2} \left(\lambda + \sigma + \beta(1-\sigma) \right) \ .
        \end{aligned}
    \end{equation}

    We can rewrite 
    \begin{equation}\label{eq: ratio computation beta < 1}
        \frac{ALG(I)}{OPT(I)} \leq \frac{ 2\lambda +1 + (\sigma-1)^2(\beta -1) }{(1-\beta)(\sigma + \lambda)^2 + \beta(1+\lambda^2)} =: \frac{A}{B} \ .
    \end{equation}
    In addition, the ratio approaches the latter ratio as $n\rightarrow\infty$, allowing us to focus on the limiting case with $\mu, \lambda,\sigma$ fixed.

    If $\sigma=0$, then we are done. Otherwise, $\sigma>0$, and we consider two cases: Either consider the type of instance $I_2$, parametrized by $\mu_2=\mu + \sigma, \lambda_2 = \lambda, \sigma_2 = 0$, or instance~$I_3$, parametrized by $\mu_3=\mu, \lambda_3 = \lambda+\sigma, \sigma_3 = 0 $ where all jobs of type $(1,0)$ are converted to jobs of type $(0,\beta)$ in~$I_2$ or to jobs of type $(1,\beta)$ in $I_3$.
    Using Equation~\eqref{eq: ratio computation beta < 1} and $\beta < 1$, we have that 
    \begin{equation}\label{eq: computation case 1 beta < 1}
        \frac{A}{B} \leq \frac{A_2}{B_2} = \frac{A - (\beta-1)\sigma(\sigma-2)}{B - (\beta-1)\sigma(-2\lambda -\sigma)} \Leftrightarrow \frac{A}{B} \geq \frac{2-\sigma}{2\lambda + \sigma} 
    \end{equation}
    for instance $I_2$ and analogously for instance $I_3$ using $\beta<1$:
    \begin{equation}\label{eq: computation case 2 beta < 1}
        \frac{A}{B} \leq \frac{A_3}{B_3} = \frac{A+\sigma(2-(\beta-1)(\sigma-2))}{B+  \beta\sigma(\sigma+2\lambda)} \Leftrightarrow \frac{A}{B} \leq \frac{2-(1-\beta)(2-\sigma)}{\beta(2\lambda + \sigma)} \ .
    \end{equation}

    We have that 
    \begin{equation}
        \frac{2-(1-\beta)(2-\sigma)}{\beta(2\lambda + \sigma)} \geq \frac{2-\sigma}{2\lambda + \sigma} \Leftrightarrow \sigma \geq 0\ .
    \end{equation}
       Hence, at least one of the two conditions on $\nicefrac{A}{B}$ holds, we can find a family of instances that achieves the same competitive ratio or worse in the limit.
\end{proof}

\section{Parameter Optimization in the Proof of Theorem~\ref{thm: Optimized Bounds}}\label{appendix: parameter optimization}
This section presents the algebraic derivations within the parameter optimization for Equation~\eqref{eq: Bound on the Ratio Randomized}, in particular:
\begin{equation}\label{eq: repeated equation}
     \max_{0\leq\mu,\lambda\leq1}\frac{1-\mu +\beta\mu^2\lambda(1-\lambda)}{1+\beta(\mu\lambda -1)^2}\ .
\end{equation}
As a first step, we show that we can restrict ourselves to parameter combinations with $\mu=1$ or $\lambda=1$. For this, we use that if $\mu<1$ and $\lambda < 1$, we can rewrite the function as a convex combination of the two with parameters $\mu_1=\mu\lambda$, $\lambda_1=1$, and $\mu_2=1$, $\lambda_2=\mu\lambda$. These are chosen such that all three parameter combinations share the same denominator, and the convex combination uses coefficients $\alpha_1=\nicefrac{(1-\mu)}{(1-\mu\lambda)}$ and $\alpha_2=\nicefrac{(\mu-\mu\lambda)}{(1-\mu\lambda)}$. We have that $\alpha_1,\alpha_2\in\left[0,1\right]$, since $0\leq \mu,\lambda < 1$, and $\alpha_1+\alpha_2=1$, and the calculation reads:
\begin{equation}\label{eq: convex combination}
     \frac{1-\mu +\beta\mu^2\lambda(1-\lambda)}{1+\beta(\mu\lambda -1)^2} = \alpha_1 \frac{1-\mu\lambda}{1+\beta(\mu\lambda-1)^2} + \alpha_2 \frac{\beta\mu\lambda(1-\mu\lambda)}{1+\beta(\mu\lambda-1)^2} \ .
\end{equation}
The left-hand side is bounded from above by the maximum of the two expressions for $\lambda_1,\mu_1$ and $\lambda_2,\mu_2$. Hence, we are left with two single-parameter optimization problems. 

For the first, we keep $\lambda=1$ fixed and vary $\mu\in\left[0,1\right]$. Differentiating $\nicefrac{(1-\mu)}{(1+\beta(\mu-1)^2)}$ and solving for stationary points yields $\beta(\mu-1)^2-1=0$ and $\mu = 1 \pm \nicefrac{1}{\sqrt{\beta}}$. For $\beta\geq1$, we hence have $\mu^*= 1- \nicefrac{1}{\sqrt{\beta}}$ and a function value of $\nicefrac{1}{(2\sqrt{\beta})}$. For $\beta<1$, we have $\mu^*=0$, since $1- \nicefrac{1}{\sqrt{\beta}}<0$, and a function value of $\nicefrac{1}{(1+\beta)}$. 

For the second, we keep $\mu=1$ fixed and vary $\lambda\in\left[0,1\right]$. Differentiating $\nicefrac{\beta\lambda(1-\lambda)}{(1+\beta(\lambda-1)^2)}$ and solving for stationary points yields $\beta\lambda^2-(1+\beta)(2\lambda-1)=0$ and $\lambda = 1 + \nicefrac{(1\pm \sqrt{1+\beta})}{\beta}$. We hence have $\lambda^*=1 + \nicefrac{(1- \sqrt{1+\beta})}{\beta}$ and a function value of $\nicefrac{(\sqrt{1+\beta}-1)}{2}$. 

Comparing the optimal expressions from the single-parameter optimizations for $\beta<1$, we have that $\nicefrac{1}{(1+\beta)}>\nicefrac{1}{2}>\nicefrac{(\sqrt{2}-1)}{2}>\nicefrac{(\sqrt{1+\beta}-1)}{2}$, hence the former term dominates on this branch. The comparison on the branch with $\beta\geq 1$ leads to an intersection at $\beta^*\approx 1.947$, the solution of $\beta^2=2\sqrt{\beta}+1$, where $\nicefrac{1}{(2\sqrt{\beta})}$ dominates for $1\leq\beta\leq \beta^*$ and $\nicefrac{(\sqrt{1+\beta}-1)}{2}$ dominates for values $\beta\geq \beta^*$. 

Overall, we obtain the expression for the competitive ratio as stated in Equation~\eqref{eq: randomized SIDLE exact competitive ratio}. The instances in the proof of Theorem~\ref{thm: Optimized Bounds} are parametrized such that there are $\mu\lambda n$ jobs of length $0$, $\mu(1-\lambda) n$ jobs of length $\beta$, and $
(1-\mu) n$ jobs of length $\beta+\varepsilon$. Hence, we get instances only consisting of jobs with $p_j=\beta+\varepsilon$ for $\beta<1$, instances with $p_j\in \lbrace 0, \beta + \varepsilon\rbrace$ for $1\leq \beta \leq \beta^*$, and instances with $p_j\in \lbrace 0, \beta\rbrace$ for $\beta>\beta^*$.
\end{document}